\documentclass[12pt]{article}
\usepackage{graphicx}
\usepackage[utf8]{inputenc}
\usepackage[normalem]{ulem}
\usepackage{amssymb,amsmath,amsthm}
\usepackage{graphicx}

\graphicspath{{./}}

\usepackage{color}
\usepackage{caption,graphicx}

\begin{document}

\newtheorem{lemma}{Lemma}
\newtheorem{theorem}{Theorem}
\newtheorem{cor}{Corollary}
\newtheorem{defn}{Definition}
\newtheorem{remark}{Remark}

\newcommand{\nwc}{\newcommand}
\nwc{\Levy}{L\'{e}vy}
\nwc{\Holder}{H\"{o}lder}
\nwc{\cadlag}{c\`{a}dl\`{a}g }
\nwc{\be}{\begin{equation}}
\nwc{\ee}{\end{equation}}
\nwc{\ba}{\begin{eqnarray}}
\nwc{\ea}{\end{eqnarray}}
\nwc{\la}{\label}
\nwc{\nn}{\nonumber}
\nwc{\Z}{\mathbb{Z}}
\nwc{\C}{\mathbb{C}}
\nwc{\E}{\mathbb{E}}
\nwc{\R}{\mathbb{R}}
\nwc{\N}{\mathbb{N}}
\nwc{\prob}{\mathbb{P}}
\nwc{\Skor}{\mathbb{D}}
\nwc{\PP}{\mathcal{P}}
\nwc{\M}{\mathcal{M}}
\nwc{\law}{\stackrel{\mathcal{L}}{\rightarrow}}
\nwc{\eqd}{\stackrel{\mathcal{L}}{=}}
\nwc{\vp}{\varphi}
\nwc{\Vp}{\Phi}
\nwc{\veps}{\varepsilon}
\nwc{\eps}{\ve}
\nwc{\qref}[1]{(\ref{#1})}
\nwc{\D}{\partial}
\nwc{\dnto}{\downarrow}
\nwc{\nsup}{^{(n)}}
\nwc{\ksup}{^{(k)}}
\nwc{\jsup}{^{(j)}}
\nwc{\nksup}{^{(n_k)}}
\nwc{\inv}{^{-1}}
\nwc{\one}{\mathbf{1}}
\nwc{\argmin}{\mathrm{arg}^+\mathrm{min}}
\nwc{\argmax}{\mathrm{arg}^+\mathrm{max}}
\nwc{\Rplus}{\R_+}
\nwc{\Rorder}{\R_<}
\nwc{\xx}{\mathbf{x}}
\nwc{\emp}{\mu}
\nwc{\empN}{F^n}
\nwc{\lossN}{L^n}
\nwc{\Lip}{\mathrm{Lip}}
\nwc{\BL}{\mathrm{BL}}
\nwc{\ddm}{d}
\nwc{\dif}{D}
\newcommand{\no}{\textcolor{red}{[CLUNK!] }}
\newcommand{\stkout}[1]{\ifmmode\text{\sout{\ensuremath{#1}}}\else\sout{#1}\fi}
\nwc{\fpl}{\stkout{(}}

\title{ Markov models of coarsening in two-dimensional foams with edge rupture}
\author{Joseph Klobusicky}
\date{}
\maketitle

% The correct dates will be entered by the editor

\maketitle

\begin{abstract}
We construct  Markov processes for    modeling the rupture of edges in a
two-dimensional foam. We first describe a network model for tracking   topological
information of  foam networks with a state space of  combinatorial embeddings.
Through  a mean-field rule for randomly  selecting  neighboring cells of
a  rupturing edge, we  consider a simplified version of the network  model
in the sequence space $\ell_1(\mathbb N)$ which counts   total numbers of
cells with $n\ge 3$ sides ($n$-gons).   Under a large cell limit, we show
that number densities of $n$-gons in the mean field model  are solutions
of an infinite system of     nonlinear kinetic equations. This system  is
 comparable  to the Smoluchowski coagulation equation for coalescing particles
under a multiplicative collision kernel, suggesting gelation behavior. Numerical
simulations  reveal gelation in the mean-field model, and also  comparable
statistical behavior between  the network and mean-field models.

\end{abstract}

\textbf{Keywords:} foams, kinetic
equations, \and Markov processes, combinatorial embeddings
 
 \textbf{Mathematics Subject Classification:} 82D30,37E25,60J05

\section{Introduction}

Foams are a common instance of  macroscopic material structure  encountered
in manufacturing.   Some foams are desirable, such as those found in  mousses,
 breads, detergents, and cosmetics, while others are unwanted byproducts
in the production of steel, glass, and pulp~\cite{weaire2001physics,cantat2013foams}.
To  better understand the complex geometric and topological  structure of
  three-dimensional foams,  scientists have designed simplified experiments
to create two-dimensional foams,   often through trapping a soap foam in
a region between two transparent plates  thin enough for only a single layer
of cells to form \cite{burnett1995structure,glazier1987dynamics,duplat2011two,vandewalle2001cascades}.

To replicate the topological transition that we find in an edge rupture,
 the author has conducted a simple experiment with a soap foam consisting
of a  mixture of liquid dish soap and water.  The mixture is
vigorously stirred to produce a foam and then spooned onto a $28 \times 36
\times .3$ cm transparent acrylic plate.  Another plate is placed on top
of the foam and then  compressed to form a two-dimensional structure. The plates
are  tilted vertically   to drain liquid, and after several minutes the foam
sufficiently dries into a structure approximating a planar network.
To produce  the transition seen in Fig. \ref{popfig},
a small local force is applied to the outside of a plate at the center of
an edge, causing it to rupture, immediately followed by each of the two neighboring
edges   at the rupturing  edge's endpoints merging into a single edge . While
the  experiment just described selects a single  edge for rupture, multiple
ruptures  can  occur naturally without applying external forces,  with a
 typical
time scale for the coarsening of  the foam  on the order of tens of minutes
\cite{chae1997dynamics}. The rupture rate can be increased  through using
 a weaker surfactant or applying heat. Typically, periods  between ruptures
are nonuniform, with infrequent ruptures eventually turning into a cascading
  regime during which the majority of ruptures occur \cite{vandewalle2001cascades}.

  \begin{figure}
 \centering
  \includegraphics[width=.8\textwidth]{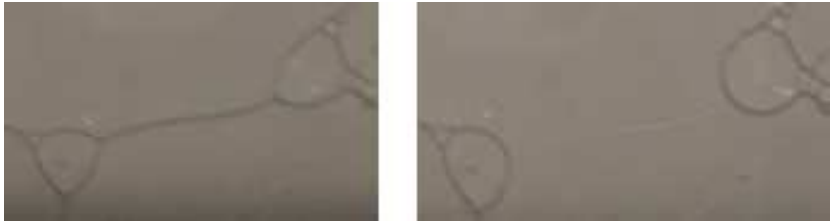}
  \caption{\textbf{An edge in a two-dimensional soap foam immediately before
(left) and after (right) its rupture.} The length of the edge before rupture
is approximately  $1$ cm.\            } \label{popfig}
 \end{figure}

The focus for this work is to construct  minimal Markovian models
for studying the  statistical behavior of two-dimensional foams which coarsen
  through  multiple ruptures of the type seen  in Fig. \ref{popfig}. As a
basis for comparison, let us briefly overview the more well-studied coarsening
process of gas diffusion across cell boundaries.  For a foam with isotropic
surface tension on its boundary, gas diffusion induces  network edges to
evolve with respect to mean curvature flow.   In two dimensions, the  $n-6$
rule of von Neumann and Mullins \cite{von1952discussion,mul56} gives a particularly
elegant result that area growth of  each cell with $n$ sides is constant
and proportional to   $n-6$.  A cell with fewer than six sides can therefore
shrink to a point, triggering topological changes in its neighboring cells.
 Several physicists     used the $n-6$ rule  to  write down  kinetic limits
in the form of transport equations with constant area advection and a nonlinear
 intrinsic source term for handling topological transitions.  Simulations
of these models were shown to produce universal statistics  found in physical
experiments and direct numerical simulations on planar networks \cite{flyvbjerg1993model,marder1987soap,fra882,klobusicky2020two}.

    The time scale for  coarsening by gas diffusion is
much slower than edge rupture, and is often measured  in tens of hours \cite{chae1997dynamics}.  In a  foam with
rupture, gas diffusion is a relatively minor phenomenon in determining
densities for numbers of sides, and our models for this study will not consider
diffusion by coarsening.    Furthermore, the repartitioning of areas for cells after a rupture is a complex event
where  edges quickly adjust to reach a quasistationary state to minimize
total surface tension, and unfortunately  there is no known analog of the
$n-6$ rule relating area and cell topology for ruptures.  Since a main theme
in this paper is to keep our models minimal, we  will
avoid questions related to cell areas, but rather only study frequencies
of $n$-gons (cells with $n$ sides) after a total number of ruptures are performed.
In Section \ref{sec:graph}, we construct a  Markov chain model over a state
space of combinatorial embeddings, which we refer to as `the network model'.
Correlations in space between which two edges rupture in succession have been observed in physical experiments \cite{burnett1995structure}.  However, Chae and Tabor \cite[Sect. IV:A]{chae1997dynamics} performed numerical simulations  on several random models of foam rupture with uncorrelated rules for selecting rupturing edges, including selecting edges with uniform probability,   and found comparable long-term behavior to physical  experiments. In particular, all models produced networks consisting of larger cells surrounded by many smaller cells having few sides. 

Using combinatorial rather than geometric embeddings  as a state space in
the network model allows us to   track topological information of a network
without needing to record  geometrical quantities such as edge length, vertex
coordinates, or curvature.  A state transitions by  removing a random edge
from the network and performing the smoothing operation seen in Fig. \ref{popfig}.
Explicit expressions for  state transitions  are provided in Section \ref{subsec:rupture}.
 While the network model does not need any geometric information to be well-defined,
it is possible to generate a visualization of the coarsening process if we
are provided with vertex coordinates for an  initial embedding.  Snapshots
of the Markov chain $\{\mathbf
G(m)\}_{m \ge 0}$ after $m = 250  k$ ruptures for $k = 1, \dots, 8$ are given
in Figs. \ref{poppath1} and \ref{poppath2}  for foams having initial conditions
of 2500 cells generated by a randomly seeded Voronoi diagram  and hexagonal
lattice.

A schematic of the changes in side numbers for cells adjacent to a rupturing
edge  is given in Fig. \ref{popfig2}. Typically, edge rupture can be seen
as the composition of  two graph operations:
\begin{enumerate}
\item \textbf{Face merging}: The two cells whose boundaries completely contain
the  rupturing edge will join
together as a single cell after  rupture.
If the  two cells have $i$ and \(j$ sides before rupture, the new cell created
from face merging has $i+j-4\) sides. \item \textbf{Edge merging}:
Each of the two cells sharing only a single vertex with the
rupturing edge  will have two of its edges smooth to create a single edge.
If the  two cells have $k$ and $l$ sides before rupture, the 
cells after edge merging have  $k-1$ and $l-1$ sides. 
\end{enumerate}

\begin{figure}[hbt!]
\includegraphics[width=\linewidth]{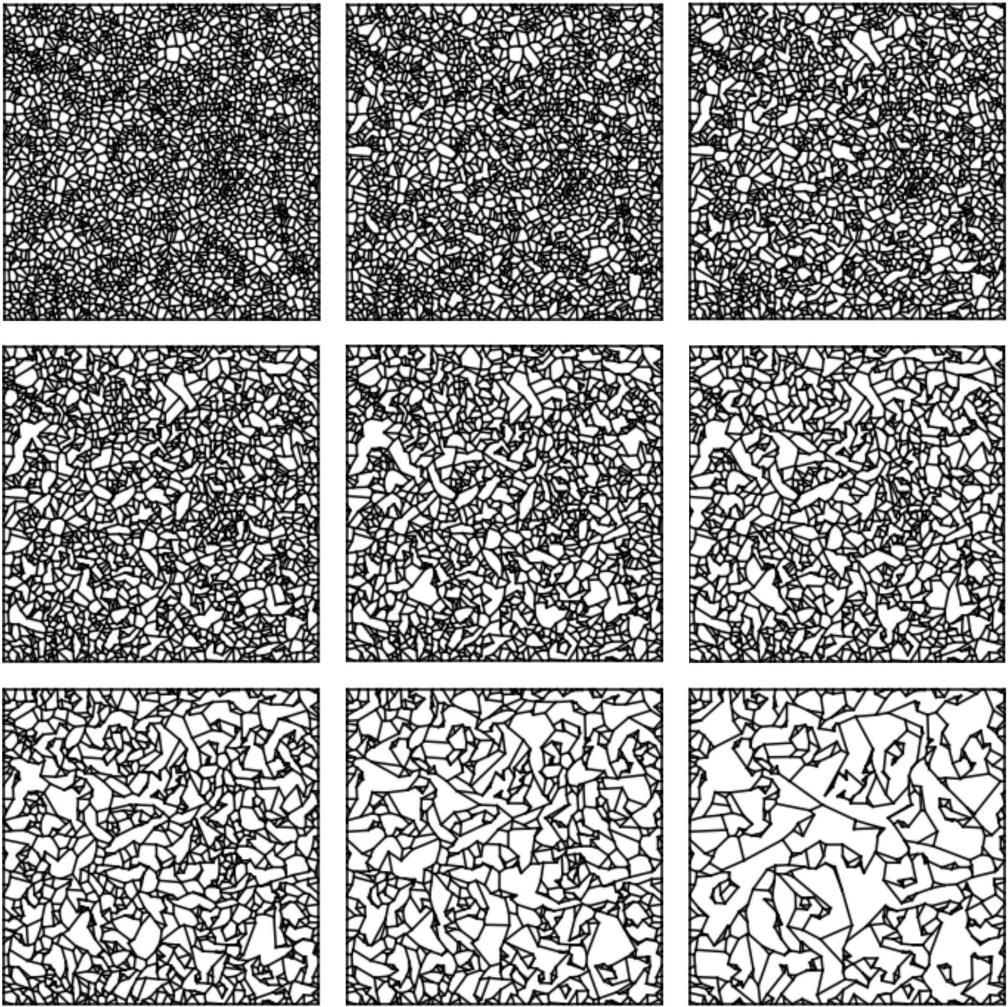}
\caption{Snapshots of a sample path $\{\mathbf
G(m)\}_{m \ge 0}$ with disordered initial conditions of a Voronoi diagram with
random seeding. Top row
left
to right:  $\mathbf
G(0),\mathbf
G(250),$ and $\mathbf G(500)$. Middle row: $\mathbf
G(750),\mathbf
G(1000),$ and $\mathbf G(1250)$. Bottom row: $\mathbf
G(1500),\mathbf
G(1750),$ and $\mathbf G(2000)$.  \ } \label{poppath1}
\end{figure}

\begin{figure}[hbt!]
\includegraphics[width=\linewidth]{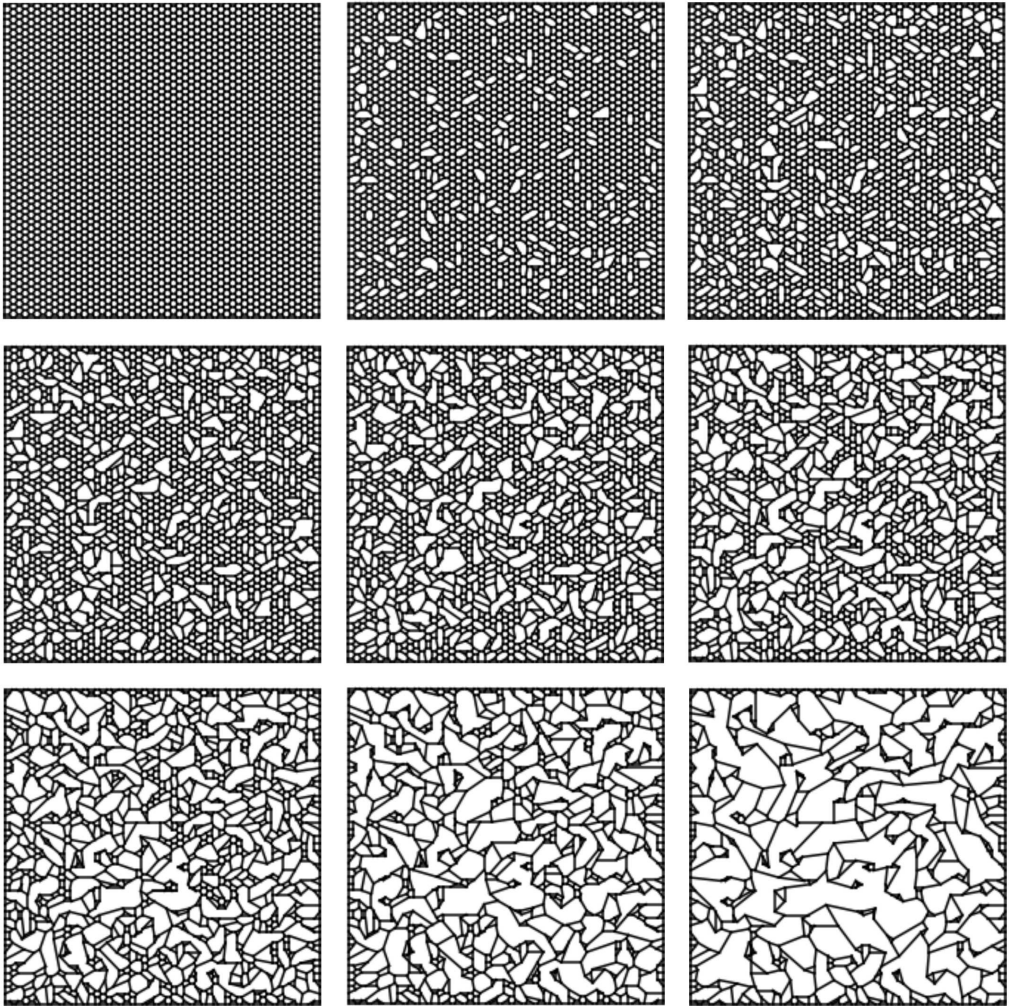}
\caption{Snapshots of a sample path $\{\mathbf
G(m)\}_{m \ge 0}$ with ordered hexagonal lattice initial conditions. Top row
left
to right:  $\mathbf
G(0),\mathbf
G(250),$ and $\mathbf G(500)$. Middle row: $\mathbf
G(750),\mathbf
G(1000),$ and $\mathbf G(1250)$. Bottom row: $\mathbf
G(1500),\mathbf
G(1750),$ and $\mathbf G(2000)$. } \label{poppath2}
\end{figure}

  \begin{figure}
 \centering
  \includegraphics[width=.6\textwidth]{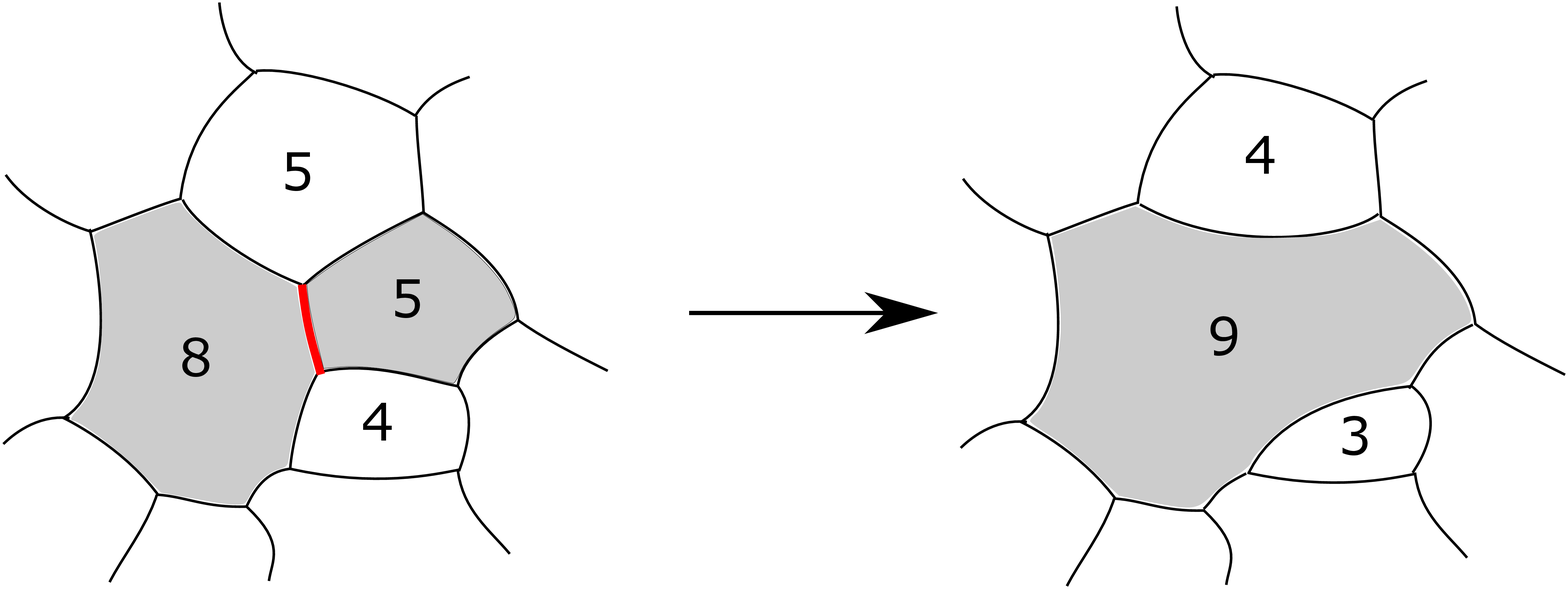}
  \caption{\textbf{Side numbers immediately before (left) and after (right)
a typical  edge rupture}. A numbers inside a cell denotes its
number of sides. The rupturing edge  is shown in bold  red. Shaded cells denote those which undergo face merging.} \label{popfig2}
 \end{figure}

In Fig. \ref{popfig2}, shaded cells with eight and five sides merge to form a cell with nine sides, and the two unshaded cells with five and four sides undergo edge merging, producing cells with four and three sides. For a cell $C_n$ containing $n$ sides, we represent edge rupture with the
three irreversible reactions  \begin{align}
C_i+C_j\rightharpoonup C_{i+j-4},  
\quad C_k \rightharpoonup C_{k-1},\quad
C_l \rightharpoonup C_{l-1}.
\label{reaction}
\end{align} 
Rupture is mentioned as an `elementary move'  in \cite{glazier1992kinetics}
and \cite{chae1997dynamics} along with reactions occurring from gas diffusion,
although the reaction (\ref{reaction}) is not explicitly written down. It
is important to note that not all ruptures  will produce   the reactions
in (\ref{reaction}).  For instance, some edges do not have four distinct
cells as neighbors. As an example, the `isthmus' shown   Fig. \ref{fig:isthmus}  has only three neighbors.   To further complicate matters, rupture causes a loss
of numbers of sides in neighboring cells which   can create loops, multiedges,
and islands.
To keep our model minimal, in Section \ref{sec:ruptype} we define  a class  \textit{rupturable} edges which restricts all reactions to satisfy (\ref{reaction}), with the exception of some edges at the domain boundary which have a similar reaction.  Appendix \ref{sec:nonstd} is meant to explicitly show the variety of reactions which can occur when some of the conditions for rupturable edges are lifted.  Section  \ref{subsec:rupture} shows that the rupture operations restricted to rupturable edges is closed in a suitably chosen space of combinatorial embeddings.  This enables us to construct a well-defined Markov chain by randomly selecting edges to rupture at each transition.

 A major advantage of  keeping the network model minimal is the relative
ease of creating a simplified  mean-field Markov model to approximate statistical
topologies.    In Section \ref{sec:mean}, we define a mean-field rule and
its associated Markov chain for randomly selecting neighbors of a rupturing
edge which only  depends  on   $n$-gon frequencies.  A formal argument
for deriving kinetic equations  in the large particle limit of the mean-field
model is given in Section \ref{sec:kinetic}.  The limiting equations give
 number densities $u_n(t)$ of $n$-gons, with a time scale  $t\ge 0$  of the
fraction of edge ruptures over the initial number of cells.   The kinetic
equations take the form of the nonlinear autonomous system
\begin{align}
\dot u_n &=  \sum_{i = 3}^{n+1} K^F_{4+n-i,i}u_{4+n-i}u_i+2q_{n+1}^E
u_{n+1}-2
q_{n}^Fu_n-2q_{n}^Eu_n, \quad n \ge 3. \label{odeneg1}
\end{align}
The terms $K^F, q^F,$ and $q^E$ are state-dependent rates of creation and
annihilation of $n$-gons through face and
edge merging. We derive explicit formulas for these rates in  Section \ref{sec:kinetic}.

We note the similarity  of  (\ref{odeneg1})  to the Smoluchowski coagulation
equation \cite{smoluchowski1916drei} for number densities $v_n$ of size $n$
coalescing clusters, given by   
\begin{equation}
\dot v_n = \frac 12 \sum_{i = 1}^{n-1}K_{n-i, i} v_{n-i}v_i-\sum_{i \ge 1}K_{n,i}v_nv_i,
\quad n\ge 1.
 \label{smol}
\end{equation}A major result for the Smoluchowski equations is the decrease
of the total mass $\sum_{k \ge 1} kv_k$ under  the multiplicative kernel
$K_{i,j} = ij$ \cite{mcleod1962infinite}.  The missing total number is interpreted
as a gel, or a single massive
particle of infinite mass. In (\ref{odeneg1}), we find that the rate of cell
merging between $i$ and $j$-gons is  
\begin{align*}
K^F_{i,j}= \frac{ij}{S^2(1-p_3^2)}, \quad S = \sum_{k \ge 3} ku_k, \quad
p_3 = 3u_3/S. 
\end{align*}
The similarity between  $K^F_{i,j}$ and $K_{i,j}$  suggests the formation of a gel in
   (\ref{odeneg1}),  which should be interpreted as a  cell with infinitely
many sides.  

In Section \ref{sec:sims}, we perform Monte Carlo numerical simulations of edge ruptures over large networks for both the network and mean-field models. The large initial cell number produces number densities which are approximately deterministic (having low variance at all times).  For the mean-field model, we find strong evidence of gelation behavior.  While we find that topological frequencies  between
the mean-field and network models generally  agree to within a few percentage
points,
we observe  that gelation behavior is quite weak in  the network model. 
We conjecture that this is
likely due to  the rupturability requirements imposed in Def.  \ref{rupdef}.

As the kinetic equations for (\ref{odeneg1}) only give interactions between cells with finitely many sides (the \textit{sol}),  we should interpret that the mean-field model approximates (\ref{odeneg1}) only in the pregelation phases.  The postgelation regime will require separate kinetic equations which include interactions of the sol with the gel. An advantage to Monte Carlo simulations is that they are a relatively simple method for approximating limiting number densities in both regimes, as opposed to  the  numerics involved in a deterministic discretization of  the infinite system (\ref{odeneg1}) (see \cite{filbet2004numerical} for a finite volume method for simulating coagulation equations).  We hope to produce a more rigorous numerical and theoretical treatment of the phase transition in future works.

\section{The network model} \label{sec:graph}

In this Section, we construct a minimal  Markovian model, referred to as
the `network model', for tracking topological information of foams. 
\subsection{ Foams as planar embeddings} \label{sec:foamembed}

We begin our construction of the network model by defining   geometric embeddings
 which  model two-d foams.  Our
 space of embeddings is chosen to capture the typical topological
reaction (\ref{reaction}) seen in physical foams while also   being sufficiently
minimal to permit a derivation of  limiting kinetic equations.

\begin{defn}
The space  of \textbf{simple foams}   $ \mathfrak M(S)$ in the unit square
$S = [0,1]^2$
is  the set of planar embeddings $\widetilde G \subset S$ of a simple
connected trivalent planar graph $G$ such that $\widetilde G$ contains the
boundary  $\partial S.$ 
\end{defn}

Some comments are in order for our choice of embeddings.  We first mention
that  the ambient space $S$ can certainly be generalized to other subsets
 of the plane or a two-dimensional manifold.  However,  restricting to the
unit square is a natural choice since previous physical experiments involve
generating foams between two rectangular glass panes, and numerical  simulations
 generating foams are often performed on rectangular domains \cite{burnett1995structure,glazier1987dynamics,duplat2011two,vandewalle2001cascades}.
 We also require that  the boundary  $\partial S$ is contained
in the graph embedding so that the collection of cells   covers all
of $S$.  Edges contained in $\partial S$ are considered as walls, and are
not allowed to rupture.  We do, however, allow rupture of edges with one or
both vertices on $\partial S$. The reaction equations for these ruptured edges  are
 slightly different than (\ref{reaction}), as there is no cell adjacent to
the vertex which undergoes edge merging.

Requiring $G$ to be trivalent is a consequence of the Herring conditions
\cite{herring1999surface} for isotropic networks, which can be derived through
a force balance argument. Connected and simple graphs are imposed for keeping
the model minimal. Connectivity allows for us to represent  all sides in
a cell with a single directed loop.  Simple graphs forbid loops and multiedges,
which in  graph embeddings are one and two-sided cells. To prevent the creation
of 2-gons we will  require  reactants   in (\ref{reaction})  to contain sufficiently
many sides.            

 For a planar embedding  $\widetilde G \subset S$ of a graph $G$, we can
represent faces using counterclockwise vertex paths $\sigma = (v_1, \dots,
v_n)$, where   $\{v_i, v_{i+1}\}$ is an edge in $G$ for $i = 1, \dots, n-1$.
By a `counterclockwise' path, we mean that a single face lies to the left
on an observer traversing the edges in  $\sigma$ from $v_1$ to $v_n$.   Since
$\widetilde G$ is trivalent, we refer to counterclockwise vertex paths as
\textbf{left paths},  and a length three left path $(v_1, v_2, v_3)$ as a
\textbf{left turn}.  For a geometric embedding with  curves as edges,
left paths can always be computed through an application of   Tutte's Spring
Theorem \cite{tutte1963draw}, which guarantees  a combinatorally isomorphic
embedding $\mathcal  T(\widetilde G)$ of $\widetilde G$ where all edges
are represented  by line segments. By `pinning' external vertices of an outer
face, vertex coordinates of $\mathcal  T(\widetilde
G)$ can be computed as a  solution of  a linear system. In our case, if we
fix the outer face  in $\mathcal  T(\widetilde
G)$ as the boundary of the unit square $\partial S$, with the same vertex
coordinates on $\partial S$ as $\widetilde G$, we ensure that the Tutte embedding
is orientation preserving, so that counterclockwise paths in $\widetilde
G$ remain counterclockwise in $\mathcal  T(\widetilde
G)$.  Technically, Tutte's Spring Theorem requires $\widetilde G $ to be
3-connected, which is not a condition in the definition of  a simple foam,
but this can be  handled by inserting sufficiently many edges to $\widetilde
G$ to make it 3-connected, obtaining the Tutte embedding on the augmented
graph, and then removing the added edges. Left paths in $\widetilde G $ then
correspond to the  counterclockwise polygonal paths  in $\mathcal  T(\widetilde
G)$ that can be found by comparing angles between incident edges at vertices.

Starting with a directed edge $(v_1, v_2)$, we may traverse the edges of
a face by taking a maximal number of distinct left turns.  Doing so gives
us a method for representing faces in an embedding through left paths.  

\begin{figure}
\centering
\includegraphics[width=.3\textwidth]{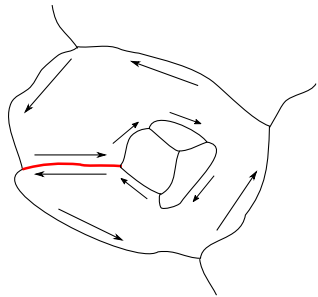}
\caption{An example of an isthmus edge, shown in bold red.  Arrows near edges denote the path for the left loop containing the isthmus.} \label{fig:isthmus}
\end{figure}

\begin{defn}
A \textbf{left
loop} 
$(v_1, \dots, v_n, v_{n+1})$ is a left path    where (i) $v_1 = v_{n+1}$,
 (ii) $(v_{i-1},
v_{i}, v_{i+1})$ are distinct a left turns for $i = 2, \dots, n$, and (iii)
 $(v_n, v_{1}, v_{2})$ is a left turn.
\end{defn}
  It is possible that both $(u,v)$ and
$(v,u)$ are contained in a left loop. When this occurs, it follows that $(u,v)$
is an \textbf{isthmus}, or an edge whose removal disconnects the graph. See Fig. \ref{fig:isthmus} for an example of an isthmus edge and its associated left loop. Since
$\widetilde G$  is connected, a left loop uniquely determines a face, which
we write as\be \label{facerep}
f = [ v_1, \dots, v_{|f|} ],
\ee
with the understanding that $(v_1,\dots, v_{|f|}, v_1)$ is a left loop, and
square brackets denote that  (\ref{facerep}) is an equivalence relation
of left loops under a cyclic permutations of indices. The number of sides
for a face is given by $|f|$.   The collection $ \Pi$ of left loops obtained from
an embedding of a graph $G$ is known in combinatorial topology  as a \textbf{combinatorial
embedding} of $G$ \cite{edmonds1960combinatorial}.  As a convention, $\Pi$
does  not include the left loop for the outer face obtained by traversing
 $\partial S$ clockwise. Note that  $\Pi$ only consists of  vertices in $G$,
and contains no geometrical information from the  embedding. 

\begin{defn}
The pair  $\mathcal G = (G,  \Pi)$ belongs to  the space of \textbf{combinatorial
 foams} in $S$, denoted $\mathcal C(S)$, if $G$ is a simple trivalent connected
graph and $\Pi$ is a combinatorial embedding  of $G$ obtained from a simple
foam. 
\end{defn}

In the language of computational geometry,
combinatorial foams are provided through  doubly-connected edge lists \cite{de1997computational}.
 Loops can be recovered  through repeatedly applying the next and previous
pointers of
half-edges (equivalent to direct edges).

\subsection{Typical edges and rupturability  } \label{sec:ruptype}
We now aim to identify edges in $\mathfrak M(S)$ whose ruptures are well-defined
and  follow the reaction (\ref{reaction}).  One implicit assumption in (\ref{reaction})
is that an edge has four distinct neighboring cells: two  for performing
face merging and two others  for edge merging. We formalize the  differences
between  types of neighboring cells of an edge in the following definition.

\begin{defn}
For an edge $e = \{u, v\}$ in $G$ and a combinatorial foam   $\mathcal G
= (G, \Pi) \in \mathcal C(S)$, a face $f\in \Pi$ is an \textbf{edge neighbor}
of $ e$  if $(u,v)$ or $(v, u)$ is in $f$. If there exist vertices  $a,b
\notin \{u,v\}$ such that  $(a,u,b)$ or $(a,v,b)$  is a left turn in $f$,
then $f$ is a \textbf{vertex neighbor} of $e$.
\end{defn}
Edge and vertex neighbors will be those cells which will undergo face and
edge merging in reaction (\ref{reaction}), respectively.  
When considering common trivalent networks such as Archimedean lattices and
almost every randomly generated Voronoi diagram, interior edges (those not
intersecting $\partial S$) will have two edge neighbors and two vertex neighbors.
This is in fact the maximum number of neighbors an edge can have.

\begin{lemma}
For $\mathcal G = (G, \Pi) \in \mathcal C(S)$, then $e\in G$ can have at
most four
distinct neighbors. If $e$ has four neighbors, then two neighbors will be
vertex
neighbors, and two will be vertex neighbors.
\end{lemma}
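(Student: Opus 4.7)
The plan is to bound the edge neighbors and vertex neighbors of $e = \{u,v\}$ separately, using two structural facts about combinatorial embeddings of trivalent graphs: each directed edge lies in exactly one left loop, and at each trivalent vertex the rotation system produces exactly three left turns with that vertex in the middle.

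First I would handle edge neighbors. By the definition of a combinatorial embedding, every directed edge belongs to exactly one left loop, so the edge neighbors of $e$ are precisely the faces containing the directed edges $(u,v)$ or $(v,u)$. This yields at most two distinct edge neighbors (with equality unless $e$ is an isthmus, in which case there is only one).

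Next I would bound the vertex neighbors. Trivalence of $u$, together with the fact that $G$ is simple (so $u$ has three distinct neighbors $v, x, y$), implies there are exactly three left turns of the form $(a,u,b)$, one for each incoming directed edge at $u$. Two of these necessarily involve the edge $e$: the one with incoming edge $(v,u)$, forcing $a = v$, and the one with outgoing edge $(u,v)$, forcing $b = v$. The single remaining left turn at $u$ has $a, b \in \{x, y\}$, hence $a, b \notin \{u, v\}$, and contributes at most one vertex neighbor. The symmetric argument at $v$ gives at most one additional vertex neighbor, for a total of at most two.

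Adding the two bounds gives at most $2 + 2 = 4$ distinct neighbors, establishing the first claim. For the second claim, if $e$ has four distinct neighbors, both partial bounds must be saturated simultaneously and no face can simultaneously serve as edge neighbor and vertex neighbor (else fewer than four distinct faces would result); so the four neighbors split as exactly two edge neighbors and two vertex neighbors. I do not anticipate a real obstacle: the argument is essentially a counting exercise driven by trivalence of $u$ and $v$ and the bijection between directed edges and face-traversal slots. The one point worth stating carefully is that the "remaining" left turn at $u$ is well-defined and has $a \neq b$, which follows because $G$ being simple ensures that $u$'s three neighbors $\{v, x, y\}$ are distinct.
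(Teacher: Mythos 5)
Your argument is correct and is essentially the paper's own proof: both count the face-traversal slots (directed edges / left turns) at the two trivalent endpoints and use the fact that each lies in a unique left loop, the paper packaging them as four left arcs $a_1,\dots,a_4$ while you split the count into at most two edge neighbors plus at most two vertex neighbors. No substantive difference in approach or rigor.
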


\begin{proof}
An edge $e = \{u_0, v_0\}$ and its neighbors can be labeled as in Figure
 \ref{leftpths}(a).  The four  left arcs  
\begin{align}
a_1 &= (u_1, u_0, v_0, v_1), &  &a_2 = (v_2, v_0, u_0, u_2), \label{la1}\\
 a_3 &= (u_2, u_0, u_1), &  &a_4 = (v_1, v_0, v_2) \label{la2}
\end{align} 
contain all possible  directed edges with $u_0$ or $v_0$ as an endpoint,
which implies there can be at most four neighbors of $e$, in which case each
arc belongs to a separate face. The two edge neighbors contain arcs $a_1$
and $a_2$, and the two vertex neighbors contain  arcs $a_3$ and $a_4$.  
\end{proof}
To limit reaction types,  we will permit only  edges with four neighbors
to rupture, with  the exception of boundary edges (those with vertices in
$\partial S$) which have  similar local configurations. 

\begin{defn} \label{typdef}
An edge with four edge neighbors is a \textbf{typical interior edge}.

An edge $e$ is a \textbf{typical boundary edge} if either

(a) one and only one vertex of $e$ is in $\partial S$, and $e$ has two edge
neighbors
and one vertex neighbor, or

(b) both vertices of $e$ are in $\partial S$, and $e$ has two edge neighbors
and no vertex neighbors.

The collection of typical interior edges and typical boundary edges are
called \textbf{typical edges}. \end{defn}

There are multiple examples where  an edge in $\widetilde G$ 
is atypical (not typical). For instance, an isthmus has only one edge neighbor.
Other
examples include
neighbors of isthmuses. For each of these configurations, rupturing an atypical
 edge will produce reactions
different from (\ref{reaction}).  See  Appendix
\ref{sec:nonstd} for a cataloguing of atypical edges and their associated
reactions.  

A second issue arising in (\ref{reaction}) occurs when a  3-gon is a
reactant in edge merging, or two 3-gons are reactants in face merging, producing
a 2-gon. However,  2-gons correspond to multiedges, and so are forbidden
in simple foams. We impose one more requirement which ensures  that all cells
after rupture have at least three sides.
\begin{defn} \label{rupdef}
A typical edge is \textbf{rupturable} if both of its vertex neighbors contain
at least four edges, and at least one of its edge neighbors contains four
edges. The set of rupturable edges for a combinatorial foam $\mathcal G$
is denoted $\mathcal R(\mathcal G)$.
\end{defn}
While we forbid 1- and 2-gons in simple foams for simplicity, we remark that
they can exist in physical foams.  Their behavior, however, can be quite
erratic.  For instance, when a 2-gon is formed, Burnett et al. \cite{burnett1995structure}
observed that sometimes the cell will slide along an edge until reaching
a juncture, mutate into a 3-gon,  and then quickly vanish to a point.   

\subsection{Edge rupture } \label{subsec:rupture}

We are now ready to define an edge rupture operation on $\mathcal C(S)$.
 For an interior rupturable edge $e = \{u_0, v_0\} $, let $u_i$ and $v_i$
for $i = 1,2$ denote the vertex neighbors for $u_0$ and $v_0$, labeled such
that we have the left arcs (\ref{la1})-(\ref{la2}) as shown in  Figure
 \ref{leftpths}(a). The four neighbors of $\{u_0,v_0\}$ in $\Pi$ are written
as
 \begin{align}
f_1 &= [ u_2, u_0, u_1, A_1 ], \qquad f_2 = [ v_1, v_0, v_2, A_3],
\label{facerepstart}\\
f_3
&= [ u_1, u_0, v_0, v_1 , A_2] \qquad f_4 = [ v_2,
v_0, u_0, u_2 , A_4], 
\end{align} where $A_1, \dots, A_4$ are left arcs. It is possible that $u_
i = v_j $ for some $i,j \in \{1,2\}$ so  that an edge neighbor is a 3-gon.
However, $u_1 \neq u_2$ and $v_1 \neq v_2$ since this would make  $G$ a multigraph.
Also, the sets   $\{u_i: i = 1,2\}$ and  $\{v_i: i = 1,2\}$  are not equal,
 since this would force both edge neighbors of $e$ to have three sides, 
violating the rupturability conditions in Def. \ref{rupdef}.

\begin{defn} \label{rupdef2}
For 
$\mathcal G = (G, \Pi)\in  \mathcal
C(S)$, we define an \textbf{edge rupture} $\Phi_e(\mathcal G)$ for an  edge
$e = \{u_0, v_0\}\in \mathcal R(\mathcal G)$ through the mapping  $(G, \Pi)
\mapsto 
\mathcal G' = (G', \Pi')$.  If $G$ has vertices labeled as in Fig. \ref{leftpths}(a),
 we obtain $G'$ from $G$ by \begin{enumerate}
\item Removing $\{u_0, v_0\}$,
followed by \item Edge smoothing on the (now degree 2) vertices $u_0$ and
$v_0$ by  removing   $\{u_0, u_1\}, \{u_0, u_2\},
 \{v_0, v_1\}$,
and  $\{v_0, v_2\}$, and adding edges $\{u_1, u_2\}$ and $\{v_1, v_2\}$.
\end{enumerate}
 If $e$ is an interior edge, we obtain $\Pi'$ by removing faces $f_1, \dots,
f_4$
from $\Pi$ and adding \begin{align}
f_1' = [ u_2, u_1, A_1], \quad f_2' = [ v_2, v_1, A_3], \quad
f_3' = [ u_1, u_2, A_4, v_2, v_1,A_2]. \label{facerepend}
\end{align}
For a boundary edge where   $u_{0}$ (or $v_0$) is in $\partial S$,  the vertex
neighbor $f_1$  (or $f_2$) does not exist, and we omit the addition of $f_1'$
(or $f_2'$) in (\ref{facerepend}).
\end{defn}
A schematic of an embedding before and after edge rupture process is given
in Fig. \ref{leftpths} (a)-(b).
From counting sides of faces removed and added in rupture, we   obtain

\begin{lemma} \label{reacttypes}
The types of reactions from edge rupture are limited to
either\begin{enumerate}
\item Interior rupture:
\be
C_i+C_j\rightharpoonup C_{i+j-4} ,\quad C_k \rightharpoonup C_{k-1},\quad
 C_l \rightharpoonup C_{l-1},  \label{intreact}
 \ee
 \item Boundary rupture with one vertex on $\partial S$: 
 \be
 C_i+C_j\rightharpoonup
C_{i+j-4} ,\quad C_k \rightharpoonup C_{k-1},
  \label{bdryreact}
  \ee
\item Boundary rupture with two vertices on $\partial S$: 
\be
 C_i +C_j \rightharpoonup C_{i+j-4}. \label{twobdryreact} 
 \ee
\end{enumerate}
\end{lemma}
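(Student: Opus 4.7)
The plan is to verify each of the three cases by directly reading the face-size bookkeeping off the explicit rupture operation in Def. \ref{rupdef2}. Since Def. \ref{rupdef2} already specifies exactly which faces are deleted from $\Pi$ and which are added to form $\Pi'$, the lemma reduces to comparing side counts of the new faces against the old ones.

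For the interior case, I would parameterize the four neighbor faces in (\ref{facerepstart}) by the lengths of their trailing arcs: let $\alpha = |A_1|$, $\beta = |A_2|$, $\gamma = |A_3|$, $\delta = |A_4|$ be the number of vertices those arcs contribute. Then $|f_1| = \alpha + 3$, $|f_2| = \gamma + 3$, $|f_3| = \beta + 4$, and $|f_4| = \delta + 4$. Reading off the new faces from (\ref{facerepend}) gives $|f_1'| = \alpha + 2$, $|f_2'| = \gamma + 2$, and $|f_3'| = \beta + \delta + 4$. Setting $k = |f_1|$, $l = |f_2|$, $i = |f_3|$, and $j = |f_4|$, these become $|f_1'| = k-1$, $|f_2'| = l-1$, and $|f_3'| = i + j - 4$, which is precisely (\ref{intreact}).

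The two boundary cases then follow directly from the closing clause of Def. \ref{rupdef2}. When $u_0 \in \partial S$, the vertex neighbor $f_1$ is absent from $\Pi$ (the outer face bordered by $\partial S$ is by convention excluded from $\Pi$), and the addition of $f_1'$ is omitted; symmetrically for $v_0 \in \partial S$. Dropping the corresponding edge-merging reactions from the interior list reduces (\ref{intreact}) to (\ref{bdryreact}) when exactly one endpoint lies on $\partial S$, and to (\ref{twobdryreact}) when both do.

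The only real subtlety, and the step I would handle most carefully, is the degenerate situation in which an edge neighbor of $e$ is already a $3$-gon, that is, $u_i = v_j$ for some $i,j \in \{1,2\}$. In that case one of the arcs $A_2, A_4$ is empty and the face representations in (\ref{facerepstart}) collapse, so I would verify that the size identities above remain valid and that $f_3'$ is a legitimate simple left loop. The rupturability condition in Def. \ref{rupdef} forbids both edge neighbors from being $3$-gons simultaneously, so $|f_3'| = i + j - 4 \geq 3$, while $u_1 \neq u_2$ and $v_1 \neq v_2$ (since $G$ is simple) ensure that the new edges $\{u_1,u_2\}$ and $\{v_1,v_2\}$ are genuine. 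Aside from this bookkeeping check, the proof is a routine read-off of the definitions.
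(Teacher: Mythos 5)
Your proposal is correct and takes essentially the same route as the paper, which simply asserts the lemma ``from counting sides of faces removed and added in rupture'' using the face representations in (\ref{facerepstart}) and (\ref{facerepend}); your arc-length bookkeeping and the handling of the boundary and 3-gon edge-neighbor cases make that count explicit and match the intended argument.
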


\begin{figure}
\centering
\includegraphics[width=.8\textwidth]{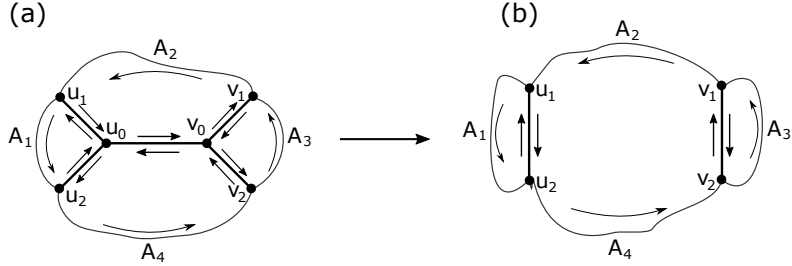}
\caption{Left loops before and after the rupture of $\{u_0, v_0\}$.   } \label{leftpths}
\end{figure}

It is also straightforward to show
\begin{lemma} \label{conlem}
Let $\mathcal G = (G, \Pi)\in \mathcal C(S) $. If $e \in \mathcal R(\mathcal
G)$ and  $\Phi_e(\mathcal G) = (G', \Pi')$, then  $G'$ is connected. \end{lemma}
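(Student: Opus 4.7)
The plan is to reduce the statement to the claim that $e$ is not a bridge of $G$. By Definition \ref{rupdef2}, $G'$ is obtained from $G$ by first deleting the edge $e = \{u_0, v_0\}$ and then smoothing the resulting degree-two vertices $u_0$ and $v_0$ (only the interior endpoint in the boundary case), i.e., replacing the path $u_1$–$u_0$–$u_2$ by a single edge $\{u_1, u_2\}$, and analogously at $v_0$. Smoothing a degree-two vertex preserves connectivity in both directions: any walk in $G - e$ that uses the subpath $u_1 - u_0 - u_2$ descends to a walk in $G'$ using $\{u_1,u_2\}$, and conversely any walk in $G'$ using $\{u_1,u_2\}$ lifts to a walk in $G-e$ through $u_0$. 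Hence $G'$ is connected if and only if $G - e$ is connected, so it suffices to show that $e$ is not a bridge.

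To show that $e$ is not a bridge, I would exploit the face structure granted by rupturability. Because $e \in \mathcal R(\mathcal G)$ is in particular a typical edge, Def.\ \ref{typdef} together with the preceding lemma gives that $e$ has two distinct edge neighbors in $\Pi$; equivalently, the directed edges $(u_0,v_0)$ and $(v_0,u_0)$ lie in two different left loops of $\Pi$. Suppose, for contradiction, that $e$ were a bridge, so that $G - e$ splits as $A \sqcup B$ with $u_0 \in A$ and $v_0 \in B$. Consider the left loop $f_3$ containing $(u_0,v_0)$: starting at $u_0$, the walk crosses into $B$ along $e$ and eventually returns to $u_0$, which forces it to cross from $B$ back to $A$; but $e$ is the unique edge joining $A$ and $B$, so the walk must traverse $e$ in the reverse direction, placing $(v_0,u_0)$ in $f_3$ as well. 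This contradicts the two-edge-neighbor condition, so $e$ is not a bridge, and combined with the first paragraph we conclude $G'$ is connected.

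For the boundary cases the argument is the same: a typical boundary edge still has two distinct edge neighbors, the non-bridge argument above is unaffected, and the smoothing step is only applied at the interior vertex (the boundary vertex simply becomes a degree-two vertex on $\partial S$ after the edge is deleted), which does not disturb connectivity. The main (quite minor) obstacle is the converse of the remark in Section \ref{sec:foamembed} relating isthmuses to left loops, namely that a bridge forces both directed traversals into a common face. This is precisely the "crossing back through $e$" argument sketched above and is standard for planar combinatorial embeddings, so no new ingredient is required.
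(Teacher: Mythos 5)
Your proposal is correct and follows essentially the same route as the paper: reduce to showing $e$ is not an isthmus, then note that edge smoothing preserves connectivity. The only difference is that the paper simply asserts that a rupturable edge cannot be an isthmus, whereas you supply the justification (two distinct edge neighbors force $(u_0,v_0)$ and $(v_0,u_0)$ into different left loops, which a bridge would contradict) — a worthwhile detail, but not a different argument.
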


\begin{proof}

Since $e$ is rupturable,  it cannot be an isthmus,  so $G$  remains connected
after removing $e$ in Step 1 of Def. \ref{rupdef2}.  It is also connected
after Step 2 as edge smoothing  clearly maintains connectivity.  
\end{proof}

Our main result is then

\begin{theorem} \label{closelemma}
 Edge rupture is closed in the space of combinatorial  foams. In other
words, for $\mathcal G = (G, \Pi) \in \mathcal C(S)$ and $e \in \mathcal
R(\mathcal G)$, then  $\Phi_e(\mathcal G)  = (G', \Pi')\in
\mathcal C(S)$.
\end{theorem}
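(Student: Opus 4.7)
The plan is to verify the three defining properties of $\mathcal C(S)$ for the pair $(G', \Pi')$: that $G'$ is a simple, trivalent, connected planar graph, and that $\Pi'$ is the combinatorial embedding of some simple foam $\widetilde{G'}\subset S$.

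First I would dispense with the graph-theoretic properties of $G'$. Connectedness is given by Lemma~\ref{conlem}. Trivalence follows from a direct degree calculation at the affected vertices: $u_0$ and $v_0$ are removed by the smoothing step, while each of $u_1, u_2, v_1, v_2$ loses exactly one edge to $\{u_0, v_0\}$ and gains exactly one of the two new smoothed edges, keeping its degree at three; all other vertices are untouched. Simplicity requires ruling out loops and multiedges among the two new edges. Loops are excluded because $u_1 \ne u_2$ and $v_1 \ne v_2$, as noted in the setup of Definition~\ref{rupdef2}. For multiedges, I would trace back each potential pre-existing copy of $\{u_1,u_2\}$ or $\{v_1,v_2\}$, or the coincidence $\{u_1,u_2\}=\{v_1,v_2\}$, to a violation of one of the face-size bounds in Definition~\ref{rupdef}: a pre-existing $\{u_1,u_2\}$ on the $f_1$ side would force $f_1 = [u_2,u_0,u_1]$ to be a triangle, contradicting the rupturability hypothesis $|f_1|\ge 4$, and the coincidence $\{u_1,u_2\} = \{v_1,v_2\}$ forces both edge neighbors $f_3$ and $f_4$ to be triangles, which is already excluded in the setup of Definition~\ref{rupdef2}.

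Second I would produce a geometric realization. By the definition of $\mathcal C(S)$, there exists a simple foam $\widetilde G\subset S$ whose combinatorial embedding is $\Pi$. I would construct $\widetilde{G'}$ by a purely local surgery inside a small disk $U$ containing the arc representing $e$ and no other vertex: erase that arc and replace the two arcs meeting at $u_0$ by a single smooth arc from $u_1$ to $u_2$ inside $U$, and similarly at $v_0$. Because the modification is confined to $U$ and rupturability forbids $e$ from lying in $\partial S$, the boundary is preserved, so $\widetilde{G'}$ is a planar embedding of $G'$ in $S$ containing $\partial S$, i.e.,\ a simple foam.

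Third, and this is where the bulk of the argument lies, I would verify that the combinatorial embedding of $\widetilde{G'}$ coincides with $\Pi'$ as written in (\ref{facerepend}). Every left loop of $\Pi$ that uses no directed edge incident to $u_0$ or $v_0$ is unaffected by the local surgery and appears verbatim in $\Pi'$. The only loops that change are $f_1,\ldots,f_4$; tracing the left turns at each smoothed vertex shows that the triple $(u_2,u_0,u_1)$ in $f_1$ is replaced by the straight passage $u_2 \to u_1$ across the new smoothed edge, giving $f_1' = [u_2,u_1,A_1]$, and analogously $f_2' = [v_2,v_1,A_3]$; meanwhile $f_3$ and $f_4$ are spliced at both smoothed vertices into the single loop $f_3' = [u_1,u_2,A_4,v_2,v_1,A_2]$. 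The boundary-edge cases of Definition~\ref{rupdef2}, where $f_1$ or $f_2$ is absent, require only the obvious modification of omitting the corresponding $f_i'$.

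The main obstacle is the simplicity verification in the first step, since the rupturability conditions of Definition~\ref{rupdef} appear to be precisely the minimal hypotheses needed to exclude every pathological coincidence at the level of the new edges; each degeneracy has to be traced back to a specific face-size or incidence constraint, and ensuring the list is exhaustive is the delicate part. The left-loop bookkeeping in the third step is conceptually routine once one commits to the picture of Figure~\ref{leftpths}, but it requires careful attention to the cyclic orders at each of $u_1,u_2,v_1,v_2$ after smoothing, particularly to confirm that no left loop outside $\{f_1,\ldots,f_4\}$ is inadvertently altered by the surgery.
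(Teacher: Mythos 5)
Your proof follows the same route as the paper's: connectivity via Lemma~\ref{conlem}, trivalency by a local degree count at the affected vertices, and simplicity together with the validity of the new faces via the rupturability bounds of Definition~\ref{rupdef}. The paper's own proof is only three sentences long: it compresses your first step into the single observation that the three new faces each have at least three sides (which is how it excludes the loops and two-sided faces you analyze case by case), and it omits your second and third steps entirely --- it never constructs a geometric realization of $(G',\Pi')$ or checks that $\Pi'$ is the combinatorial embedding induced by that realization. So your local-surgery argument and the left-loop bookkeeping are additional content that the paper leaves implicit, not a divergence from its method. One caveat on your simplicity analysis, which applies equally to the paper's version: a pre-existing edge $\{u_1,u_2\}$ need not lie on the boundary of $f_1$. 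The triangle $u_0u_1u_2$ can enclose further cells reached through the third edges of $u_1$ and $u_2$, in which case $|f_1|\ge 4$ holds and yet the smoothing step still produces a parallel edge; neither your face-size argument nor the paper's rules out this configuration, so the exhaustiveness worry you flag at the end of your write-up is warranted rather than merely cautious.
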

\begin{proof}
From Lemma \ref{conlem}, $G'$ is connected and it is clear that Steps 1 and
2 in Def.  \ref{rupdef2}  maintain trivalency. Furthermore, from the requirements
for rupturable edges in Definition  \ref{rupdef} and the possible reactions
listed in Lemma \ref{reacttypes}, the three new faces in $G'$ each have at
least three sides. 
\end{proof}

With Theorem  \ref{closelemma}, we are now ready to  define a Markov chain
for edge rupture in the   state space $\mathcal C(S)$. For each state $\mathcal
G \in  \mathcal C(S)$, the range of possible one step transitions is given
by  $\cup_{e \in \mathcal R(\mathcal G)} \Phi_e(\mathcal G)$. If $|\mathcal
R(\mathcal G)| \ge 1$, we randomly select a rupturable edge uniformly, so
that  the probability transition kernel $p(\cdot, \cdot) $  is defined, 
for $e \in \mathcal R(\mathcal G)$, by
\begin{equation}
p(\mathcal G, \Phi_e(\mathcal G)) = \frac{1}{|\mathcal R(\mathcal G)|}. \label{unisideprob}
\end{equation}
In the case where there are no rupturable edges, we define $\mathcal G$ to
be an absorbing state, so that $p(\mathcal G, \mathcal G) = 1$. Uniform 
probabilities were also chosen for the simplest model of edge selection
 in    \cite{chae1997dynamics} along with other distributions which considered
geometric quantities such as the length of an edge. While we focus only on
 uniform selection of edges, more complicated transitions can be considered
which depend on the local topological configurations of neighboring edges
of $\mathcal G$. 

Beginning with an initial state $\mathbf G(0) =  \mathbf G_0 \in \mathcal
C(S)$, the Markov chain $\{\mathbf G(m)\}_{m\ge 0}$ is defined  on $\mathcal
C(S)$  recursively by obtaining  $\mathbf G(m)$ from a random edge rupture
on $\mathbf G(m-1)$. After generating an initial embedding and recording
left loops to obtain an   embedding topology  $\mathbf G_0$, it is not necessary
to use any geometrical quantities  to perform one or more edge ruptures.
 If available, however, we may use vertex coordinates from initial conditions of a  simple
foam for providing a visual of sample paths.  This is done by  fixing positions
of vertices, and adding new edges in Step
2 of Def. \ref{rupdef2} as line segments. This method is especially convenient
with initial conditions such as Voronoi diagrams and trivalent Archimedean
lattices, which have  straight segments as edges and  vertex coordinates
that are easy to numerically generate, store, and access.  It should be noted
that even with a valid combinatorial embedding, representing edges as  line
 segments for each step may produce crossings in the visualization. 
However,  in multiple simulations of networks we find that such crossings
are exceedingly rare.
 
In Fig. \ref{poppath1} we show snapshots of a sample path $\{\mathbf G(m)\}_{0\le
m\le 2000  }$
under disordered initial conditions of a Voronoi diagram seeded with 2500
uniformly distributed initial site points in $S$.   Fig. \ref{poppath2} is
a sample path   with ordered initial conditions of   2500 cells in a hexagonal
lattice (an experimental method for generating two-dimensional physical foams
with lattice and other ordered structures is outlined in  \cite{bae2019controlled}).
  In both figures, snapshots are taken after $250\cdot k$ ruptures for $k
= 0, \dots, 8$. We observe that under both initial conditions,  ruptures
create networks which are markedly different from those obtained through
mean curvature flow. The most evident distinction is in the creation of high-sided
grains, which are bordered by a large number of 3 and 4-gons.  Furthermore,
the universal attractor of statistical topologies found from coarsening by
gas diffusion \cite{flyvbjerg1993model,marder1987soap,fra882,klobusicky2020two} does not appear in edge rupturing. We address statistical
topologies in more detail  in  Section \ref{sec:sims}.

\section{The mean-field model } \label{sec:mean}

In this section, we  construct a simplified mean-field model of $\{\mathbf
G(m)\}_{m \ge 0}$. The  state space $E = \ell_1(\mathbb N)$ consists of summable
sequences   $\mathbf L = (L_3, L_4, \dots )\in E$, 
with $L_n$ for $n \ge 3$ giving the total number of $n$-gons. For simplicity,
our model consists of $n$-gons  restricted to the single reaction (\ref{reaction}).
 Since there is no notion of neighboring cells in $E$, we select four cells
for face and edge merging randomly using only frequencies in $\mathbf{L}$.
 The mean-field rule is that for a randomly selected rupturable edge in a
network, the probability that a vertex or edge neighbor is a $n$-gon is proportional
 to $n$, and that are no correlations between side numbers of the neighboring
cells.  Specifically, the mean-field probabilities we use for selecting a
neighboring $n$-gon at state $\mathbf L$ are given by the two distributions
  
\begin{align}
Q(n; \mathbf L) = \frac{nL_n}{\sum_{i\ge  3} i L_i}, \quad \widetilde Q(n;
\mathbf L) = \frac{nL_n\mathbf 1_{n \ge 4}}{\sum_{i\ge  4} i L_i}. \label{qprobs}
\end{align}
Here, $Q$ is used for face merging, and allows for sampling among all cells,
whereas $\widetilde Q$ forbids sampling 3-gons and is used for edge merging.
Similar mean-field rules were a popular choice  in the creation of minimal
 models for coarsening under gas diffusion
\cite{flyvbjerg1993model,marder1987soap,fra882}. It should be noted that nontrivial 
correlations exist for the number of sides in cells bordering the same edge.  Studies
for first and higher order correlations exist and depend on the type of network considered \cite{aboav1970arrangement,meng2015study}.  Therefore, we should regard our selection probabilities   $Q$ and $\widetilde Q$, which do not take these correlations into account, as  estimates with errors that should not be expected to vanish  as the number of cells becomes large.

We randomly select two cells for edge merging from $\mathbf L$, with the number of sides  $\nu_1$ and  $\nu_2$ obtained by sampling from $\widetilde Q$. Similarly, we select two cells for face merging, having   $\sigma_1$ and  $\sigma_2$ sides obtained by sampling from $Q$. After selecting these four cells, we update $\mathbf L$ in accordance with (\ref{reaction}).  This involves removing the four reactant cells having $\nu_i$ and $\sigma_i$ sides for $i = 1,2$, and adding three product cells, having $\sigma_1+\sigma_2-4$, $\nu_1-1$, and $\nu_2-1$ sides.  

In what follows, we state in detail the process of generating $\sigma_i, \nu_i$ for $i = 1,2$ through sampling from $\mathbf L$ without replacement. Steps (1)-(4) remove cells from $\mathbf L$ which are the reactants in  (\ref{reaction}), and step (5) adds the face and edge-merged products to create $\mathbf L'$.  
\vspace{10pt}

\textbf{Mean-field process}: For a state $\mathbf L \in E$ with
$\sum_{i\ge
4} L_i \ge 3$ and $\sum_{i\ge
3} L_i \ge 4$, obtain the\ transitioned state $\mathbf L' \in E$ through performing the following steps in order: 

\begin{enumerate}
\item   Sample $\nu_1 \sim \widetilde Q(\cdot;\mathbf L)  $. Remove a $\nu_1$-gon from $\mathbf L$ and update remaining cells as $\mathbf L^{(1)} = (L^{(1)}_3, L_4^{(1)},
\dots)$, where $L^{(1)}_{\nu_1} = L_{\nu_1}-1$ and $L^{(1)}_i = L_i$ for
$i \neq \nu_1$.

\item  Sample $\nu_2 \sim \widetilde Q(\cdot;\mathbf L^{(1)})  $. Remove a $\nu_2$-gon
from $\mathbf L^{(1)}$ and update remaining cells as $\mathbf L^{(2)} = (L^{(2)}_3, L_4^{(2)},
\dots)$, where $L^{(2)}_{\nu_2} = L_{\nu_2}^{(1)}-1$ and $L^{(2)}_i = L_i^{(1)}$ for
$i \neq \nu_2$.  
 
\item  Sample $\sigma_1 \sim  Q(\cdot;\mathbf L^{(2)}).  $ 
 Remove
a $\sigma_1$-gon
from $\mathbf L^{(2)}$ and update remaining cells as $\mathbf L^{(3)} = (L^{(3)}_3,
L_4^{(3)},
\dots)$, where $L^{(3)}_{\sigma_1} = L_{\sigma_1}^{(2)}-1$ and $L^{(3)}_i = L_i^{(2)}$
for
$i \neq \sigma_1$. 

\item  Sample $\sigma_2 \sim  Q(\cdot;\mathbf L^{(3)}).  $ 
If $(\sigma_1, \sigma_2) = (3,3)$, reject both $\sigma_1$ and $\sigma_2$
and repeat steps (3) and (4).
If $(\sigma_1, \sigma_2) \neq (3,3)$, remove
a $\sigma_2$-gon
from $\mathbf L^{(3)}$ and update remaining cells as $\mathbf L^{(4)} = (L^{(4)}_3,
L_4^{(4)},
\dots)$, where $L^{(4)}_{\sigma_2} = L_{\sigma_2}^{(3)}-1$ and $L^{(4)}_i
= L_i^{(3)}$
for
$i \neq \sigma_2$.
\item Add a $(\sigma_1 +\sigma_2-4)$, $(\nu_1-1)$, and $(\nu_2-1)$-gon to  $\mathbf L^{(4)}$ to obtain the  transitioned state  $\mathbf L' = (L_3', L_4', \dots)$, with  
\begin{align}
L_n' = &L_n^{(4)}+ \mathbf 1(\sigma_1+\sigma_2-4 = n)+\sum_{j
= 1}^2\mathbf 1(n =\nu_{j}-1). \label{trans1}
\end{align} 
\end{enumerate}

Note that in Step 5 and in future equations we use the indicator notation for a statement  $A$, written as either   $\mathbf 1(A)$ or $ \mathbf 1_A$,   and defined as
\begin{equation}
\mathbf 1(A) = \begin{cases} 1 & \hbox{if } A \hbox{ holds,} \\
0 & \hbox{otherwise.} \\
\end{cases}
\end{equation}

 The requirement that there are at least four cells, and that three cells
have at least four sides is to ensure that sampling from $\widetilde Q$ and
$Q$ is always possible. Note that the sampling algorithm accounts for the
edge rupture conditions
in Def. \ref{rupdef} by restricting sampling to occur with on cells with
at least four sides in Steps 1 and 2, and also by the rejection condition
in Step 4 forbidding both cells for face merging to be 3-gons.
To ensure the sampling process is well-defined, we define states with $\sum_{i\ge
4} L_i < 3$ or $\sum_{i\ge 3} L_i < 4 $ as absorbing so that $\mathbf L'
= \mathbf L$.

 If we consider an initial distribution of cells $\mathbf L(0) \in E$, by
the above process we may  obtain a Markov chain  $\{\mathbf L(m)\}_{m\ge
0}$  defined on   $E$ by through the recursive formula $\mathbf L(m) = (\mathbf L(m-1))'$. Like $\{\mathbf G(m)\}_{m\ge 0}$, it is evident that
at each nonabsorbing state the total number of cells decreases by one, and
sum of edges over all cells decreases by six. In other words, under norms
$\|\mathbf L\| = \sum_{i\ge 3} L_i$ and  $\|\mathbf L\|_s = \sum_{i\ge 3}
i L_i$,
\be
\|\mathbf L(m)\|= \|\mathbf L(m-1)\|-1, \quad \hbox{} \quad \|\mathbf L(m)\|_s=
\|\mathbf L(m-1)\|_s-6.
\ee
We compare statistics of $n$-gons between the mean-field and network model
in Section \ref{sec:sims}.
\section{Kinetic equations of the mean-field model } \label{sec:kinetic}

By considering a network with  large number of cells, we give
a derivation of a hydrodynamic limit for the state transition given in the previous section.
For   the mean-field process $\mathbf L^N(m)$ with $N$ initial cells, we
define  time increments $t_m^N = m/N$ to write the number densities of  $n$-gons
 as a continuous time  \cadlag jump process 
\be
u_n^N(t;\gamma) =\sum_{m\ge 0}  \frac{L_n^N(m)}N \cdot \mathbf  1(t \in [t_m^N/\gamma,
t_{m+1}^N/\gamma)), \quad t\ge 0. \label{subprobs}
\ee
Here we have included a constant parameter $\gamma>0$ denoting the rate of
edge ruptures per unit time. Under the existence of limiting number densities
$u_n^N(t) \rightarrow u_n(t)$ as $N\rightarrow \infty$,
we formally derive limiting kinetic equations by computing limiting probabilities
(\ref{qprobs})
of cell selection probabilities in  face and edge merging. 

In the kinetic limit, the  $n$-gon growth rate $\dot u_n$ is equal to the
edge rupture rate $\gamma$ multiplied by the expected number $H_n[u]$ of
$n$-gons gained at a rupture with limiting number densities $u = (u_3, u_4,
\dots )$. Decomposing $H_n[u]$ with respect to different reactions, we obtain
the infinite system
\begin{equation}
\dot u_n =\gamma(H_{n,+}^F[u]+H_{n,+}^E[u]-H_{n,-}^F[u]-H_{n,-}^E[u]), \quad
n \ge 3,\label{ode0}
\end{equation}
where $H_{n, \pm}^{F/E}$ denote the expected number of created $(+)$ and
annihilated $(-)$   $n$-gons from  face ($F$) and edge $(E)$ merging. In
what follows, we compute the explicit formulas for  each term in (\ref{ode0}).

As $N\rightarrow \infty$, the differences  in probabilities in the mean-field
sampling process for sampling without replacement  vanish, so that limiting
probabilities in steps (1)-(4) of the mean-field sampling process for selecting reactants can be given solely in terms of $u$.
The limiting distribution of $Q$ in (\ref{qprobs}) is given by 
\begin{align}
  p_n = \frac{nu_n}{S(u)}, \quad S(u) = \sum_{k \ge 3}ku_k, \label{pprob}
\end{align}
and the limiting distribution of $\widetilde Q$ is 
\begin{align}
 \widetilde p_n = \frac{n
u_n}{\widetilde S(u)} \mathbf 1_{n\ge 4},  \quad \widetilde S (u)=\sum_{k
\ge 4}ku_k.
\end{align}
From  the reaction $C_{n+1} \rightharpoonup C_{n}$,   we write the  expected
number of  created  $n$-gons from edge merging as 
\begin{equation}
H_{n,+}^E = 2\widetilde p_{n+1}=    2 q_{n+1}^E u_{n+1}, \quad q_{n}^E :=\frac{n\mathbf
1_{n\ge 4}
}{\widetilde S} . \label{he}
\end{equation}
The factor of two in (\ref{he}) accounts for the two edge merging reactions
involved in each rupture.
From the reaction $C_{n} \rightharpoonup C_{n-1}$, the expected number of
annihilated $n$-gons from edge merging is
then\be
H_{n,-}^E = 2\widetilde p_{n } = 2
q_{n}^E u_{n}.
\ee
 
 Computing expected $n$-gons from face merging involves a straightforward
conditional probability calculation.  Let $\Sigma_1$ and $\Sigma_2$ be  iid
random variables with   $\mathbb{P}(\Sigma_1 = n) = p_n$ for $ n \ge 3$.
 Then the number of sides $(\sigma_1, \sigma_2)$ for the two cells selected
for face merging has the same law as $(\Sigma_1, \Sigma_2)$ under the edge
rupture condition that $(\Sigma_1, \Sigma_2) \neq (3,3)$. The expected number
of  $n$-gons selected under the reaction  $C_{i}+C_{j} \rightharpoonup C_{i+j-4}$
 is then computed  with linearity of expectation and the definition of conditional
probability:   
 \begin{align}
  H_{n,-}^F = \mathbb E[\mathbf 1_{\sigma_1 = n}+ \mathbf 1_{\sigma_2 = n}
]& = \mathbb E[\mathbf 1_{\Sigma_1 = n}+ \mathbf
1_{\Sigma_2 = n}|(\Sigma_1, \Sigma_2) \neq (3,3)]    \\&=   \frac{2p_3}{1+p_3}\mathbf
1_{n = 3}+\frac{2p_n}{1-p_3^2}\mathbf 1_{n \ge 4}. \label{hfminus0}
\end{align}
This may also be written as 
\begin{equation}
  H_{n,-}^F =2q_{n}^F  u_{n}, \quad q_{n}^F  := \frac{3}{S(1+p_3)}\mathbf
1_{n = 3}+\frac{n}{S(1-p_3^2)}\mathbf
1_{n \ge 4}.  \label{hfminus}
\end{equation}
Here, the factor of two comes from the two reactants in the single reaction
for cell merging in (\ref{reaction}).  

A similar calculation gives the probability for a pairing of cells in face
merging, with 
 \begin{align}
  p_{i,j} := \mathbb P((\sigma_1, \sigma_2) =(i,j)) = \frac{p_ip_j}{1-p_3^2},
\quad (i,j) \neq(3,3).   
\end{align}
The creation of $n$-gons through face merging can be enumerated by reactions
$C_{i}+C_{4+n-i} \rightharpoonup C_{n}$ for $i = 3, \dots, n+1$. The expected
number of $n$-gons  created  is then
\begin{align}
K^F_{i,j}&:= \frac{ij\mathbf{1}(i,j\ge 3,(i,j) \neq(3,3))}{S^2(1-p_3^2)}, \label{hfplus}\\
H_{n,+}^F  &= \sum_{i = 3}^{n+1}p_{4+n-i,i} =  \sum_{i = 3}^{n+1} K^F_{4+n-i,i}u_{4+n-i}u_i.
\label{hplus0} 
\end{align}
From (\ref{pprob})-(\ref{hfplus}), we can express     $H_n$   explicitly
in terms of $p_k$ and $\widetilde p_k$. 
For $3 \le n \le 6$, 
\begin{align}
 H_3  &=\frac{2p_3p_4}{1-p_3^2}-\frac{2p_3}{1+p_3}+ 2\widetilde p_4, \\
 H_4  &=\frac{p_4^2+2(p_3p_5-p_4)}{1-p_3^2}+ 2(\widetilde p_5-\widetilde
p_4), \\
 H_5 &=\frac{2(p_3p_6+p_4p_5-p_5)}{1-p_3^2}+  2(\widetilde p_6-\widetilde
p_5), \\
 H_6 &=\frac{2(p_3p_7+p_4p_6-p_6)+p_5^2}{1-p_3^2}+  2(\widetilde p_7-\widetilde
p_6).
\end{align}

Combining (\ref{he}), (\ref{hfminus}), and  (\ref{hfplus}), we  rewrite
(\ref{ode0})  as an  infinite-dimensional system of nonlinear, autonomous
ordinary differential equations to obtain   
\begin{align}
\dot u_n &= \gamma \cdot\left(\sum_{i = 3}^{n+1} K^F_{4+n-i,i}u_{4+n-i}u_i+2q_{n+1}^E
u_{n+1}-2
q_{n}^Fu_n-2q_{n}^Eu_n\right)  \label{ode1}
\end{align}
for $n \ge 3$. 

We note a subtlety with regards to face merging and 4-gons, due to the merging
of an $n$-gon and a 4-gon producing another $n$-gon. This reaction means
that face merging of an $n$-gon with a $4$-gon does not result in the annihilation
of an $n$-gon.  Therefore, if  we  substitute $p_n =(1-p_3^2) \sum_{i\ge
3} p_{n,i}$ into the numerators of (\ref{hfminus0}), terms containing  $p_{n,4}$,
corresponding to the reaction $C_n+C_4 \rightharpoonup C_n $, should not
be included in $H_-^F$. On the other hand, these same probabilities appear
 equally  in $H_+^F$, corresponding to $i =4$ and $n$  for the sum in (\ref{hplus0}),
in which the merging of a $n$-gon and 4-gon does not increase the total number
of $n$-gons. Thus the total  contribution of $n$-gons by face merging with
$4$-gons in (\ref{ode1}) is zero, and equation (\ref{ode1}) still holds.

 Setting $\gamma\ = 1$ and summing (\ref{ode1}) over $n\ge 3$,  we find formal
growth rates for the zeroth and first moments of $u$, with 
\begin{align}
\sum_{n\ge 3}  \dot u_n = \sum_{n\ge 3} H_n[u] = -1 \quad \hbox{ and } \quad
\sum_{n\ge 3}n\dot   u_n= \sum_{n\ge 3} nH_n[u] = -6. \label{hsums}
\end{align}
This simply reflects the fact that each rupture reduces the number of cells
in the foam by one and reduces the number of sides by six. Since the dynamical
system is infinite dimensional, however, it is not necessarily true that
we can interchange the derivative and sum in  (\ref{hsums}) and deduce that
the total side number   $S(u)  = \sum _{k\ge 3} ku_k$ satisfies $\dot S =
-6$.
A similar  issue arises in other models of coagulation with sufficiently
fast
collision rates, in which conservation of the first moment, or mass,  exists
until some nonnegative time $T_{\mathrm{gel}}$
at which total mass starts to decrease. 
A popular example is the Smoluchowski equation \cite{smoluchowski1916drei}
for coalescing clusters $A_n$ of
size $n$ under the  second order reaction 
\begin{equation}
A_i +A_j \rightharpoonup \label{smolreact}
A_{i+j}.
\end{equation}
 The proportion $v_n$ of size $n$ clusters is given by 
\begin{equation}
\dot v_n = \frac 12 \sum_{j = 1}^{n-1}K_{n-j, j} v_{n-j}v_j-\sum_{j \ge 1}K_{n,j}v_nv_j,
\quad n \ge 1  \label{smol2}
\end{equation}
for a collision kernel $K$ describing rates of cluster collisions. The  kernel
$K^F$ for cell merging in (\ref{hfplus}) bears resemblance to the multiplicative
kernel $K_{i,j} = ij$ for (\ref{smol2}), differing by a factor depending
on $S$ and $p_3$.  For (\ref{smol2}) with the multiplicative kernel, it is
well known
that a gelation
time $T_{\mathrm{gel}}$ exists, meaning that the total mass  $\sum_{k\ge 1} kv_k(t)$
is conserved for  $t \le T_{\mathrm{gel}}$, and then decreases for $t>T_{\mathrm{gel}}$ \cite{mcleod1962infinite}.     The
interpretation is that while the total
mass of finite size clusters decreases, the remaining total mass is contained
in an infinite sized cluster called a gel.

An equivalent definition of gelation time for the Smoluchowski equations comes from a moment analysis  (see \cite{aldous1999deterministic} for a thorough summary).   Denote the $k$th moment for solutions of (\ref{smol2}) as $m_k^S(t) = \sum_{j\ge1} j^k v_j(t)$.  The gelation time $T_{\mathrm{gel}}$ is then defined as the (possibly infinite) blowup time  of $m_2^S(t)$. A finite gelation time implies  an explosive flux of mass toward a large cluster, and occurs  when $m_1^S(t)$ begins to decrease. To see the blowup of $m_2^S$, we compare the  squared cluster
sizes of products and reactants in (\ref{smolreact}), with  \begin{equation}
 |A_{i+j}|^2-|A_i|^2-|A_j|^2 = 2ij. \label{adiff}
 \end{equation}
  The rate of growth for the second moment is found by summing, over $i$ and $j,$ the difference of squares in  (\ref{adiff})\ multiplied by the expected number of collisions $v_iv_j K(i,j)/2$. Thus,
\begin{equation}
\dot m_2^S = \sum_{i,j\ge 1} ijK(i,j)v_i v_j. \label{smolsec}
\end{equation}
Under the multiplicative kernel $K(i,j) = ij$, (\ref{smolsec}) with monodisperse initial conditions ($v_1(0) = 1$ and $v_j(0) = 0$ for $j\ge 2$) reduces to the elegant form
\begin{equation}
\dot m_2^S  = (m_2^S)^2 \quad \Rightarrow \quad  m_2^S(t) = (1-t)^{-1} , \quad t \in [0,1). \label{smol2momode}
\end{equation}

To compare with our kinetic equations for foams with edge rupture, we denote  moments as  $m_k(t)= \sum_{j\ge 3} j^k u_j(t)$. The difference of squares from side numbers before and after reaction (\ref{reaction}) is given by 
\begin{equation}
|C_{i+j-4}|^2-|C_i|^2-|C_j|^2 =2ij-8(i+j)+16 
\end{equation}
 for face merging and two instances of  
\begin{equation}
|C_{k-1}|^2-|C_k|^2 = -2k+1
\end{equation}
 for edge merging.   Ignoring technical issues of interchanging infinite sums, we formally compute that the second moment grows as\begin{align}
\dot m_2 &= \sum_{\substack{i,j\ge 3\\(i,j) \neq (3,3)}} (2ij-8(i+j)+16)K^F(i,j)u_i u_j+2\sum_{k\ge 4}(-2k+1)q^E_k u_k \label{foam1nd}\\
 &= \frac{2m_2^2-16m_1m_2+16m_1^2-126u_3^2}{m_1^2(1-p_3^2)}-\frac{4m_2-2m_1-30u_3}{m_1-3u_3}.\label{foam2nd}
\end{align}

As the quadratic term $m_2^2$ in  (\ref{foam2nd}) is similar to  (\ref{smolsec}), we conjecture a finite-time blowup of $m_2$.  However, we will withhold  a more rigorous moment analysis for future work, and note  that the time dependent first moment $m_1(t)$ and proportion of 3-gons $u_3(t)$ will almost certainly present difficulties in either solving or estimating  $m_2$.
In particular, it is possible that $3u_3$ may approach $m_1$ in finite time, creating a singularity in (\ref{foam2nd}).

\section{Numerical experiments} \label{sec:sims}

\begin{figure}
\centering
\includegraphics[width=\linewidth]{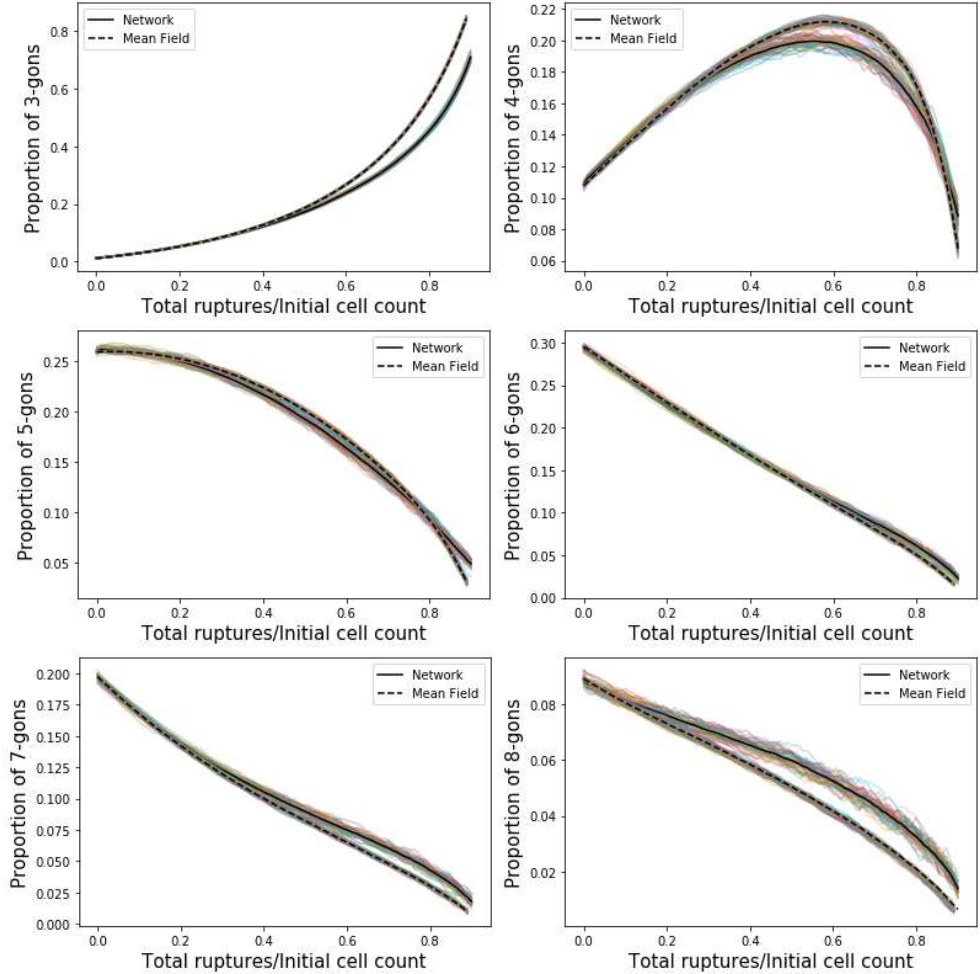}
\caption{Fractions of total ruptures over initial cell count and number densities
 of $n$-gons for network and mean-field models under disordered
initial conditions with $n = 3, \dots, 8$. Samples paths are plotted with
transparency and mean paths are plotted with solid (network) and dashed (mean-field)
lines.} \label{vorfrac}
\end{figure}

\begin{figure}
\centering
\includegraphics[width=\linewidth]{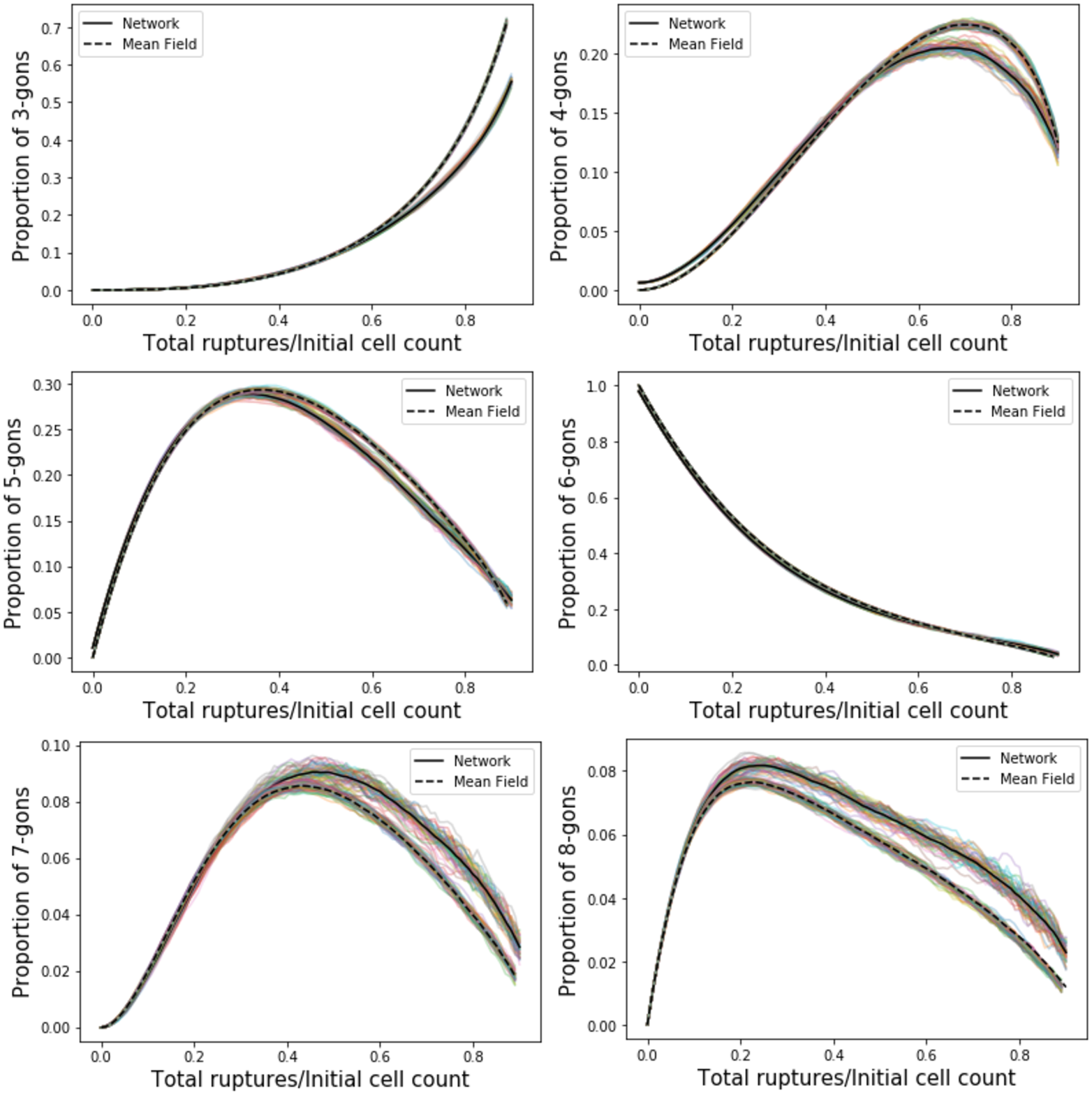}
\caption{Fractions of total ruptures over initial cell count and number
densities  of $n$-gons for network and mean-field models under ordered
initial conditions with $n = 3, \dots, 8$. Samples paths are plotted with
transparency and mean paths are plotted with solid (network) and dashed (mean-field)
lines.} \label{hexfrac}
\end{figure}

\begin{figure}
\centering
\includegraphics[width=\linewidth]{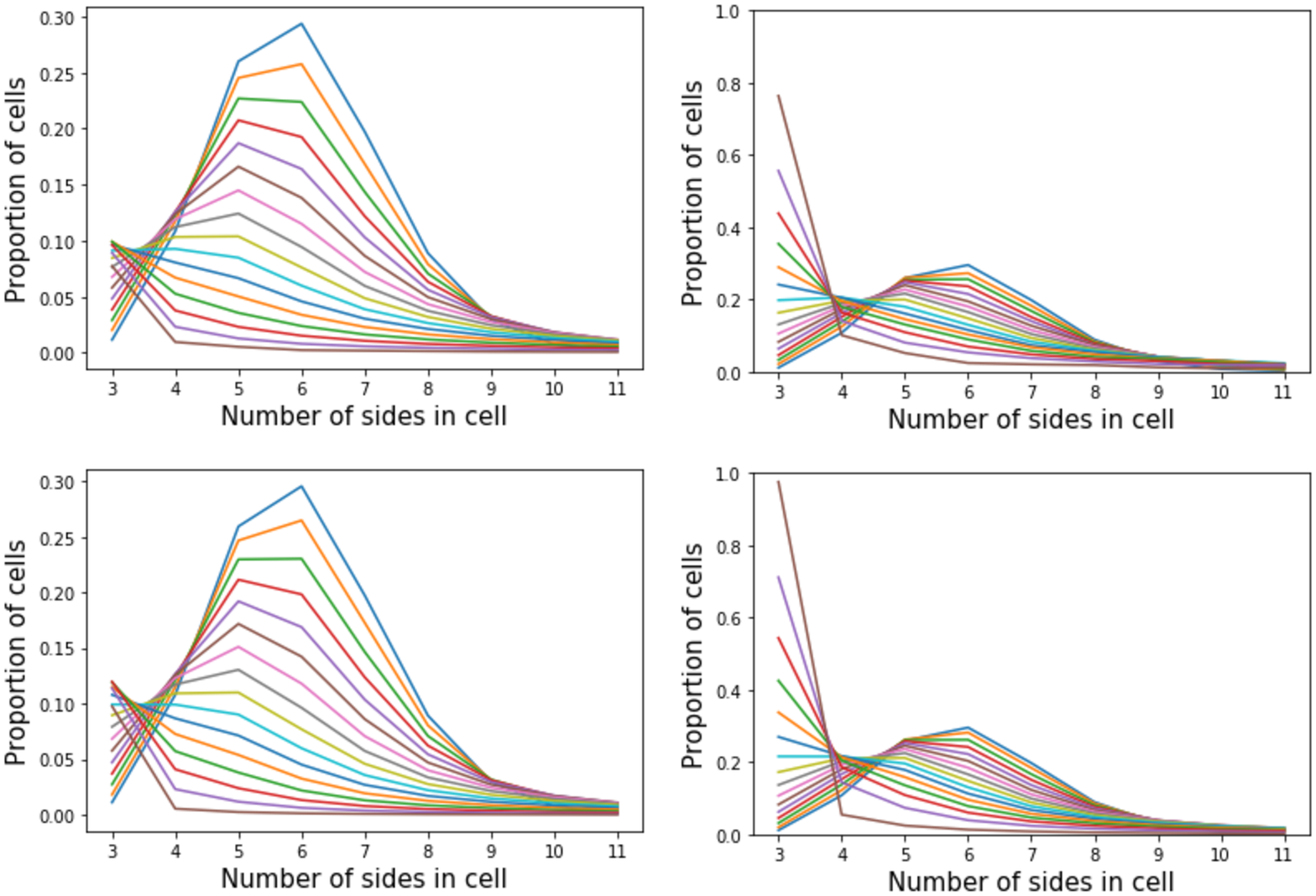}
\caption{Statistical topologies for disordered initial conditions for fractions
$ .06k$ for $k = 0, \dots, 15$ of total ruptures over the initial number
of cells. Solid
lines between integers are for ease in visualization. As a guide, in all
figures, fractions
for $6$-gons are largest at time 0 and decrease as ruptures increase. Top:\
Subprobabilities (left) and probabilities (right) for network model. Bottom:\
Subprobabilities (left) and probabilities (right) for mean-field model. }
\label{statvor}
\end{figure}

\begin{figure}
\centering
\includegraphics[width=\linewidth]{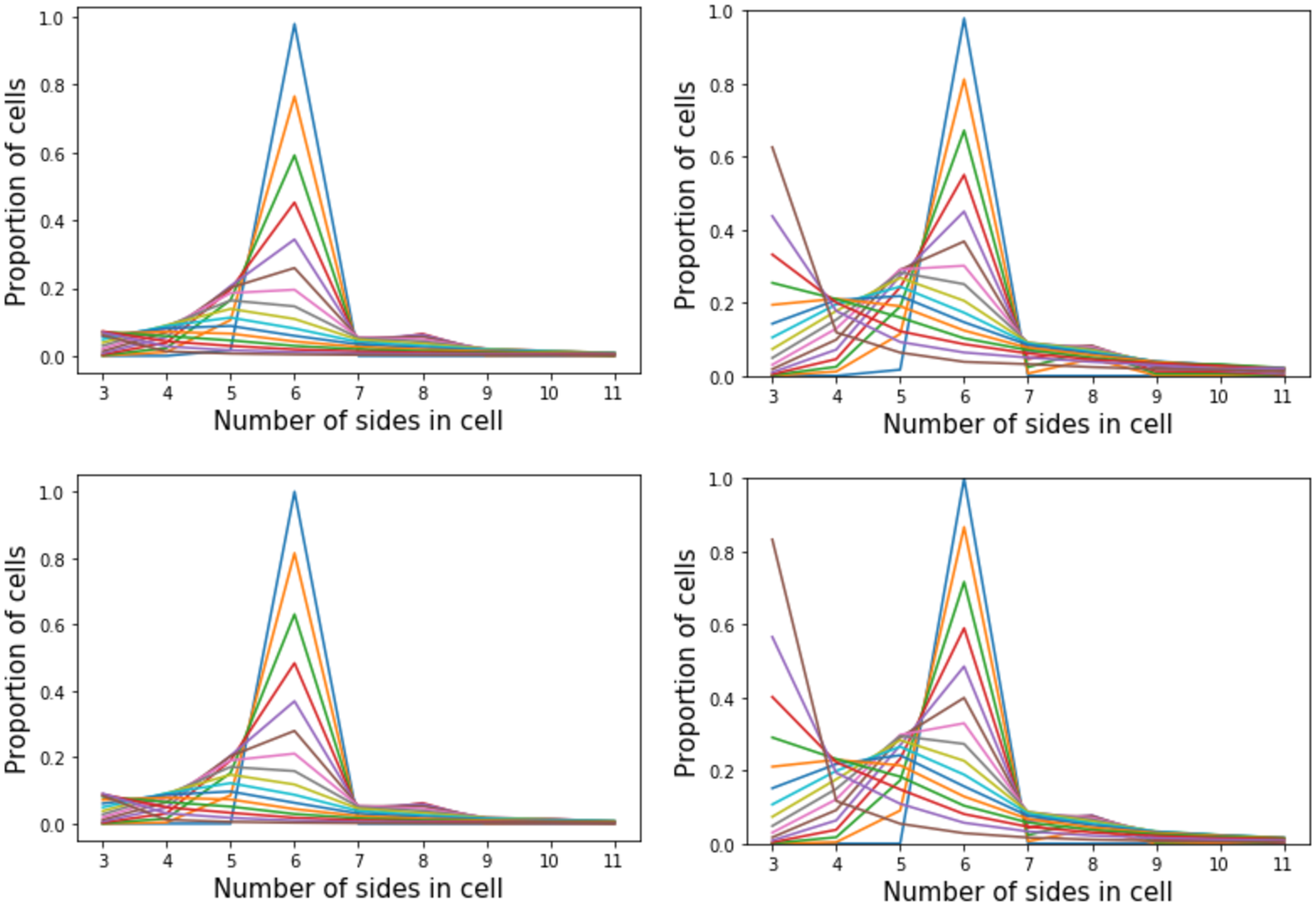}
\caption{Statistical topologies for ordered initial conditions for fractions
$ .06k$ for $k = 0, \dots, 15$ of total ruptures over the initial number
of cells. Solid
lines between integers are for ease in visualization. As a guide, in all
figures, fractions
for $6$-gons are largest at time 0 and decrease as ruptures increase. Top:\
Subprobabilities (left) and probabilities (right) for network model. Bottom:\
Subprobabilities (left) and probabilities (right) for mean-field model. }
\label{stathex}
\end{figure}

\begin{figure}
\centering
\includegraphics[width=\linewidth]{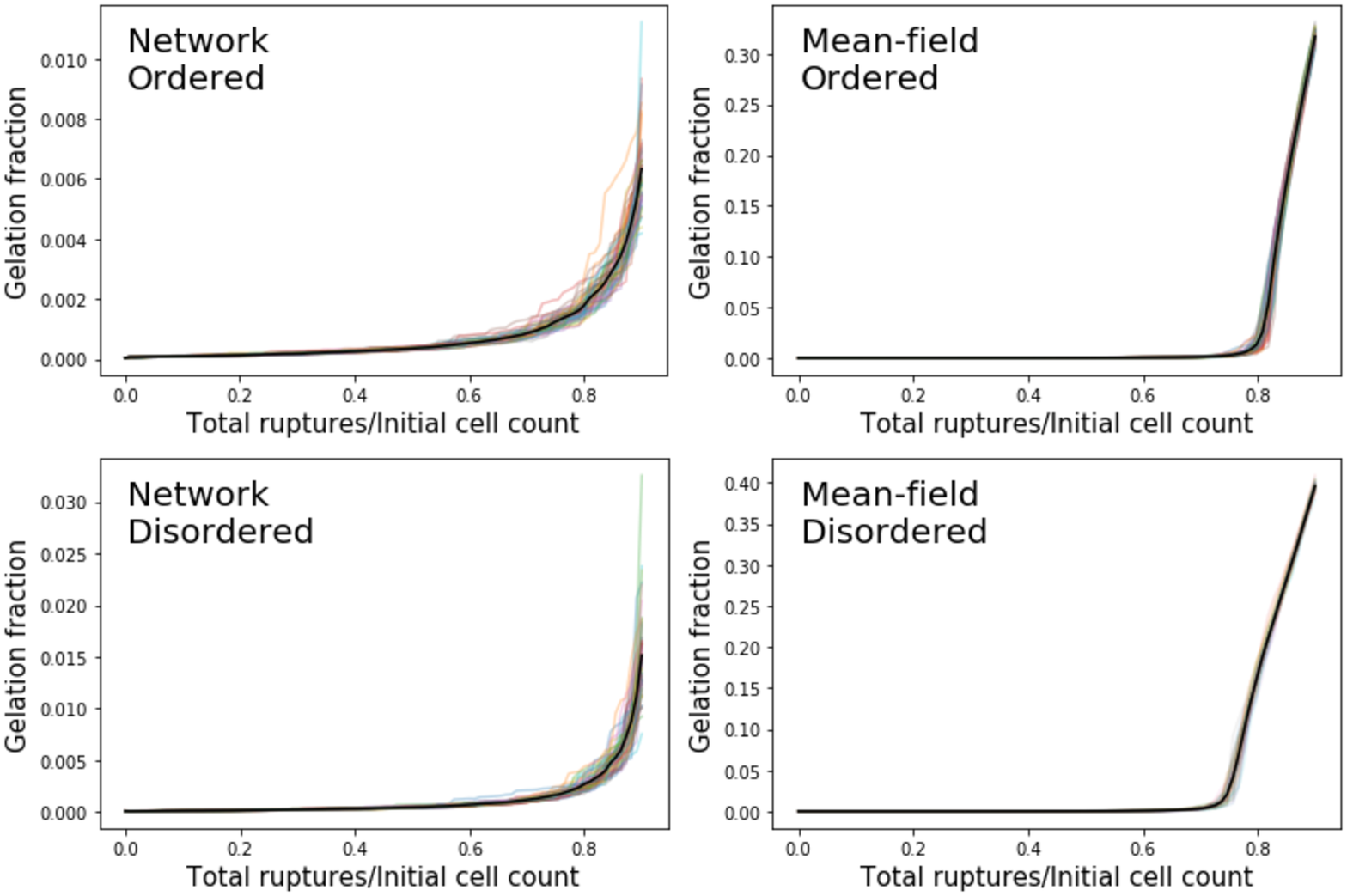}
\caption{Fractions of total ruptures over initial cell count  and gelation
fractions. \textbf{Top}: Ordered initial conditions for network (left) and mean-field (right) models. \textbf{Bottom}: Disordered initial conditions for network (left) and mean-field
(right) models. Samples paths
are plotted with
transparency and mean paths are plotted with solid lines.} \label{gelfrac}
\end{figure}

In this section, we compare simulations between the Markov chains $\{\mathbf
G(m)\}_{m \ge 0}$ and $\{\mathbf L(m)\}_{m\ge 0}$.  For simulating the network
model, we consider disordered initial conditions of a Voronoi diagram  with
a uniform random seeding of $3\times 10^4$ site points, and also ordered
initial conditions of a hexagonal (honeycomb) lattice. For each of these
initial conditions, we implement the \texttt{voronoi} library in Python to
provide the initial combinatorial embedding through a doubly connected edge
list, from which we randomly sample rupturable edges.   Over 50 simulations,
we decrease the number of cells  by a decade, performing $ 2.7 \times 10^4$
ruptures. For the mean-field model, we compute the initial  distribution
of $n$-gons by generating  Voronoi diagrams, and for ordered hexagonal lattice
conditions we set all cells to have six sides. For 50 simulations, we perform
simulations with $10^5$ initial cells and  $ 9 \times 10^4$ ruptures. Both
experiments take approximately 20 minutes to perform, although  the mean-field
model is substantially easier to implement. Attempting to increase the initial
number of cells in the network model to $10^5$ greatly increased the run
time. As we shall we, however, using  $ 3\times 10^4$ initial cells was sufficient
in creating approximately deterministic statistics for comparing against
the mean-field model.

A plot comparing total ruptures and number densities of $n$-gons  is given
in Figs. \ref{vorfrac} and \ref{hexfrac}.  A time scale is given by the total
number of ruptures over the initial number of cells, which corresponds to
the time scale in (\ref{ode0}) with $\gamma = 1$. Each sample path is plotted
with transparency along with the mean path of the samples.  We observe in
both models that the evolution of $n$-gons appears to approach a deterministic
limit, although we observe greater variance in sample paths for 7 and 8-gons.
This is  due to relatively fewer cells having 7 or 8 sides, especially as
the foam ages. For disordered initial conditions, number densities of $n$-gons
decrease for $n \ge 5$.  The number density of  4-gons reach a local maximum
when about half as many cells remain, while 3-gons  increase during the entire
process.   When 10\% of cells remain, approximately 70\% of cells in the
network model and nearly 90\% in the mean-field model are 3-gons.  This difference
 gives the greatest discrepancies between the two models.  For comparing
other $n$-gons, number densities agree to within a few percentage points,
with particularly accurate behavior during the first half of  the process.
 The two models also agree especially well for 6-gons during the entire simulation.
Similar behavior  occurs with ordered  initial conditions, although number
densities for 4, 5, 7, and 8-gons experience a temporary increase as the
network mutates from initially monodisperse conditions  of  6-gons. 

The evolution of statistical topologies is given in Figs. \ref{statvor} and
\ref{stathex}. To keep the graphs readable, we plot only the mean frequencies
over the 50 simulations, but   Figs. \ref{vorfrac} and \ref{hexfrac} show
that the variations between mean and pathwise  frequencies are  small.  We
note that  number densities in (\ref{subprobs}) are actually subprobabilities,
since in (\ref{subprobs})  we are scaling the total number of $n$-gons at
all times against the initial number of cells $N$.  We   also consider  normalized
 number densities  $\hat u_n^N =  u_n^N/\sum_{j \ge 3} u_j^N$, shown in Figures
\ref{statvor}
and \ref{stathex}.  Such a normalization more clearly demonstrates the differences
 of frequencies between  low-sided grains.

The most interesting difference between the two models occurs when comparing
cells having the most sides.    We define the gel fraction of a combinatorial
foam  $\mathcal G = (G,F)$ and a state $\mathbf L \in E$ by
\begin{equation}
\mathrm {Gel}(\mathcal G) = \frac{\max\{|f|:f \in F\}}{\sum_{f \in F} |f|},
\quad \mathrm {Gel}(\mathbf L) = \frac{\max\{i: L_i >0\}}{\|\mathbf L\|_s}.
\end{equation}
In words, the gel fraction is the largest fraction of total sides  from a
single cell.  For ordered initial conditions, we observe in Fig. \ref{gelfrac}
  that gelation
occurs at about $T_{\mathrm{gel}} = .8$, meaning that $\mathrm {Gel}(\mathbf
L) $ is approximately zero until $T_{\mathrm{gel}}$, and suddenly increases
past this point. For disordered intial  conditions, gelation
occurs at approximately $T_{\mathrm{gel}} = .75$.  Past the gelation time,
the gelation fraction appears to grow at a roughly linear rate until the
process is  terminated at $t = .9.$ Gel fractions in the network model, however,
are quite negligible, with sample paths having $\mathrm {Gel}(\mathcal G)$
rarely  above .02,  and not having the `elbow' found in the mean-field model
marking a sudden increase in gel fraction. We conjecture the lack of gelation
 is likely due to the edge rupture conditions in Def. \ref{rupdef}.  While
such conditions allow for  deriving a simple mean-field model and limiting
kinetic equations, aged foams in the network model produce a large amount
of 3-gons which forbid neighboring edges to rupture and merge large adjacent
cells.

\section{Conclusion}

We have studied a  minimal Markov chain on the state space of combinatorial embeddings which models the rupture of edges in foams.
The model can be further simplified by a mean-field assumption on the selection of which cells are neighbors of a rupturing edge, producing a Markov chain on the state space $\ell_1(\mathbb N)$.  An advantage to using such a mean-field model is in the derivation of limiting kinetic equations (\ref{ode1}), a nonlinear infinite system which bears resemblance to the Smoluchowski coagulation equations with multiplicative kernel.  Numerical simulations of the mean-field show  a similar phase transition (the creation of a gel) also seen in models of coagulation. A quadratic term in the formal derivation
of the first order ODE  (\ref{foam1nd})-(\ref{foam2nd}) suggests that the second moment $m_2(t)$ has finite time blowup, but it remains to show this rigorously.

A number of computational and mathematical  questions can  be raised from this study. First, it should be noted that the kinetic equations (\ref{ode1}) do not account for interactions between cells with finitely many sides and the hypothesized gel (an $\infty$-gon).  Thus, our kinetic equations are only valid in the pre-gelation phase. Since we should expect the $\infty$-gon to interact with the rest of the foam after gelation, the kinetic equations should be augmented, akin to the Flory model of polymer growth \cite{flory1941molecular}, to include a term $u_\infty(t)$ for the fraction of sides belonging to the gel. A numerical investigation relating the mean-field process to a discretization scheme of the kinetic equation, perhaps similar to the finite-volume method used in \cite{filbet2004numerical}, would prove useful in estimating gelation times as well as convergence rates of the stochastic mean-field process to its law of large numbers limit.

We may also focus on the more combinatorial related questions of the network model. One hypothesis is that more significant gelation behavior will arise under relaxed conditions for rupture. While dropping rupturability conditions offers a more realistic version of edge rupture, cataloguing possible reactions becomes much more complicated, as outlined in Appendix \ref{sec:nonstd}. Advances in proving a phase transition for the network model  could potentially use methods   from  the similar problem of   graph percolation \cite{bollobas2006percolation}. Here, edges are randomly occupied in a large graph, and a phase transition corresponds when the probability of edge selection passes a percolation threshold  to create  a unique graph component of occupied edges. Bond percolation thresholds have been established in a variety of networks, including hexagonal lattices \cite{sykes1964exact} and Voronoi diagrams \cite{becker2009percolation}.

Finally, we mention a natural way for introducing cell areas.  While we have interpreted networks  as foams, we can alternatively see them as spring networks, with vertices as point masses and edges as springs between the points.  This allows a natural interpretation of areas arising from  Tutte's spring theorem \cite{tutte1963draw}, which creates a planar network as minimizing distortion energy of the spring network, and cell areas can easily be computed once the minimal  configuration is found through solving a linear system. A random `snipping' of springs would typically produce the same topological reaction (\ref{reaction}), but  with  spring embedding we may now ask questions regarding gelation for both topology and area.

\textbf{Acknowledgements}:
The author wishes to thank  Anthony Kearsley and
Paul Patrone  for providing guidance during  his time as a  National Research
Council Postdoctoral Fellow at the
National Institute of Standards and Technology, and  also  Govind Menon for
helpful suggestions regarding the preparation of this paper.

\appendix

\begin{appendix}

\section{Typical and atypical reactions } \label{sec:nonstd}

 \begin{figure}
\centering
\includegraphics[width=.8\textwidth]{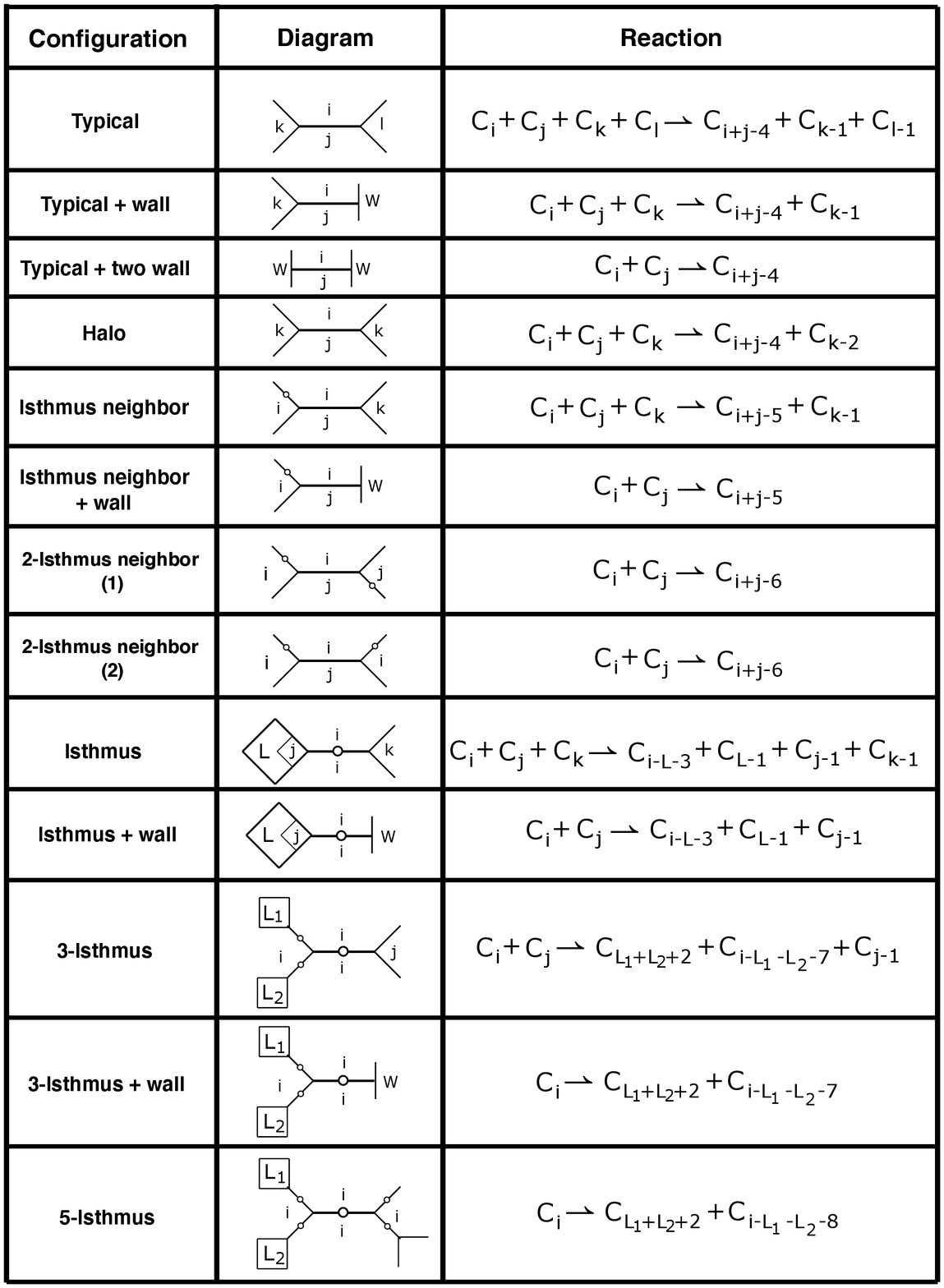}
\caption{\textbf{Typical  and atypical reactions}. The rupturable edge in
each diagram is the horizontal edge in the center of the diagram. An edge
with a $\circ$
symbol in its center denotes an isthmus.  A vertical edge with $W$ denotes
that a vertex of the rupturable edge is contained on a wall. For the 3-isthmus,
3-isthmus+wall, and 5-isthmus, a square with $L_i$ for $i = 1,2$ denotes
a left arc connected to an isthmus with $L_i$ sides. For isthmus and isthmus+wall,
the square with $j$ denotes a $j$-gon connected to an isthmus, and the square
with $L$ denotes a left arc containing  $L$ sides (including the two contained
in the $k$-gon and connected to the isthmus).   } \label{configlist}
\end{figure}

By  removing the condition in Def. \ref{rupdef} that a rupturing  edge must
be typical, we can consider the broader collection of atypical configurations
and their corresponding reactions. A diagram of the thirteen different local
configurations and  the twelve different 
reactions for typical and atypical edges are  given in Fig. \ref{configlist}.
For some of these reactions, there are cells which  undergo both edge and
face merging, so   for simplicity the collection of reacting cells and their
products are listed as a single reaction. For each reaction listed, we assume
a sufficient number of sides in each
reactant cell so that all products  have at least three sides.   The set
of atypical edges includes isthmuses, whose rupture disconnects the foam.
If we wished to continue  rupturing after rupturing an isthmus, it would
be necessary to relax the requirement of connectivity in a simple foam, which
in turn would further increase possible reactions.  Even more reactions are
possible by permitting foams to include loops (1-gons) and multiedges\ (2-gons).
For now, we withhold from enumerating this rather complicated set of reactions.

We now give an informal derivation for how the enumeration in Fig. \ref{configlist}
is obtained.  This is done by counting reactions in configurations arising
from whether a rupturable edge $e = \{u_0, v_0\}$ or  its neighbors are 
isthmuses. We begin by considering configurations with no isthmuses. We have
already discussed the three  typical reactions (\ref{intreact})-(\ref{twobdryreact}).
There is also the possibility that an interior edge $e$ contains two edge
neighbors and a single vertex neighbor  containing both $u_0$ and $v_0$.
 This cell wraps around  several other cells to contain both vertices, so
we call such a configuration a \textit{halo}.

If $e$ is not an isthmus, it is possible for either one or two incident edges
to be isthmuses, but no more.  This follows from the fact that if two isthmuses
are incident to a vertex, then the third incident edge must be an isthmus
as well.  This creates four possible configurations: two containing   one
isthmus neighbor with or without a vertex contained on the boundary, and
another containing two isthmus neighbors (both of which producing the same
reaction $C_i+C_j \rightharpoonup C_{i+j-6})$.  Since the original edge is
not an isthmus, each of these configurations after rupture remains connected.

We finally consider the set of configurations for when $e$ is an isthmus.
If no other edges are isthmuses, then $e$ can be in the interior of $S$ or
have a single vertex in $\partial S$ (two such vertices on $\partial S$ would
imply that $e$ is not an isthmus).  One or both of $u_0$ or $v_0$ can have
all of its incident edges as isthmuses.  If one vertex of $e$ has three incident
isthmuses, then the other vertex can either be on $\partial S$, or have one
or three incident isthmuses.  In total, there are five different reactions
with $e$ as an isthmus.  

Some care is needed when counting the products for reactions with isthmuses.
 Under the left path interpretation for face sides,  isthmuses count for
two sides.  Additionally, the rupture of an isthmus will disconnect the network.
 This results in the creation of a new `island' cell with a left path of
exterior edges around the island, which  are also removed from the cell originally
contained ruptured isthmus $e$. In all reactions, the change in total  number
of sides is given by the number of boundary vertices in $e$ minus six.  

In each atypical reaction, the process of edge removal and insertion
is indeed the same as  typical reactions.  Updates for left loops in the
combinatorial foam are more complicated, and will depend on
the local configuration.  As an  example, let us consider the  isthmus neighbor
configuration,
which has  a single vertex neighbor $f_1$ and two edge neighbors $f_2, f_3$.
 We write the left loops  of these neighbors as
\begin{align}
 &f_1 = [ v_1, v_0,v_2,
 A_2], \qquad  f_2 = [ v_2, v_0, u_0, u_2,   A_1],\\
&f_{3} = [ u_1, u_0, v_0,v_1,   A_3, u_2, u_0, u_1, A_4],  
\end{align} 
where $\{u_0, u_1\}$ is an isthmus, and $A_1, \dots, A_4$ are left arcs.
 After rupture, there are two cells remaining, with left loops
\begin{align}
 f_2' =[ v_1, v_2,
 A_2]
, \quad 
f_{2,3}' = [ u_1, u_2, A_1, v_2, v_1,A_3, u_2, u_1, A_4]. 
\end{align}

\end{appendix}

% Authors must disclose all relationships or interests that 
% could have direct or potential influence or impart bias on 
% the work: 
%
 \section*{Conflict of interest}
 The author declares that he has no conflict of interest.

\bibliographystyle{siam}      % mathematics and physical sciences
\bibliography{manuscript}   % name your BibTeX data base

\begin{thebibliography}{10}

\bibitem{aboav1970arrangement}
{\sc D.~Aboav}, {\em The arrangement of grains in a polycrystal},
  Metallography, 3 (1970), pp.~383--390.

\bibitem{aldous1999deterministic}
{\sc D.~J. Aldous et~al.}, {\em Deterministic and stochastic models for
  coalescence (aggregation and coagulation): a review of the mean-field theory
  for probabilists}, Bernoulli, 5 (1999), pp.~3--48.

\bibitem{bae2019controlled}
{\sc J.~Bae, K.~Lee, S.~Seo, J.~G. Park, Q.~Zhou, and T.~Kim}, {\em Controlled
  open-cell two-dimensional liquid foam generation for micro-and nanoscale
  patterning of materials}, Nature communications, 10 (2019), pp.~1--9.

\bibitem{becker2009percolation}
{\sc A.~M. Becker and R.~M. Ziff}, {\em Percolation thresholds on
  two-dimensional voronoi networks and delaunay triangulations}, Physical
  Review E, 80 (2009), p.~041101.

\bibitem{bollobas2006percolation}
{\sc B.~Bollob{\'a}s, B.~Bollob{\'a}s, O.~Riordan, and O.~Riordan}, {\em
  Percolation}, Cambridge University Press, 2006.

\bibitem{burnett1995structure}
{\sc G.~Burnett, J.~Chae, W.~Tam, R.~M. De~Almeida, and M.~Tabor}, {\em
  Structure and dynamics of breaking foams}, Physical Review E, 51 (1995),
  p.~5788.

\bibitem{cantat2013foams}
{\sc I.~Cantat, S.~Cohen-Addad, F.~Elias, F.~Graner, R.~H{\"o}hler, O.~Pitois,
  F.~Rouyer, and A.~Saint-Jalmes}, {\em Foams: structure and dynamics}, OUP
  Oxford, 2013.

\bibitem{chae1997dynamics}
{\sc J.~Chae and M.~Tabor}, {\em Dynamics of foams with and without wall
  rupture}, Physical Review E, 55 (1997), p.~598.

\bibitem{de1997computational}
{\sc M.~De~Berg, M.~Van~Kreveld, M.~Overmars, and O.~Schwarzkopf}, {\em
  Computational geometry}, in Computational geometry, Springer, 1997,
  pp.~1--17.

\bibitem{duplat2011two}
{\sc J.~Duplat, B.~Bossa, and E.~Villermaux}, {\em On two-dimensional foam
  ageing}, Journal of fluid mechanics, 673 (2011), pp.~147--179.

\bibitem{edmonds1960combinatorial}
{\sc J.~R. Edmonds~Jr}, {\em A combinatorial representation for oriented
  polyhedral surfaces}, PhD thesis, 1960.

\bibitem{filbet2004numerical}
{\sc F.~Filbet and P.~Lauren{\c{c}}ot}, {\em Numerical simulation of the
  smoluchowski coagulation equation}, SIAM Journal on Scientific Computing, 25
  (2004), pp.~2004--2028.

\bibitem{flory1941molecular}
{\sc P.~J. Flory}, {\em Molecular size distribution in three dimensional
  polymers. i. gelation1}, Journal of the American Chemical Society, 63 (1941),
  pp.~3083--3090.

\bibitem{flyvbjerg1993model}
{\sc H.~Flyvbjerg}, {\em Model for coarsening froths and foams}, Physical
  Review E, 47 (1993), p.~4037.

\bibitem{fra882}
{\sc V.~Fradkov}, {\em A theoretical investigation of two-dimensional grain
  growth in the {‘gas’} approximation}, Philosophical Magazine Letters, 58
  (1988), pp.~271--275.

\bibitem{glazier1987dynamics}
{\sc J.~A. Glazier, S.~P. Gross, and J.~Stavans}, {\em Dynamics of
  two-dimensional soap froths}, Physical Review A, 36 (1987), p.~306.

\bibitem{glazier1992kinetics}
{\sc J.~A. Glazier and D.~Weaire}, {\em The kinetics of cellular patterns},
  Journal of Physics: Condensed Matter, 4 (1992), p.~1867.

\bibitem{herring1999surface}
{\sc C.~Herring}, {\em Surface tension as a motivation for sintering}, in
  Fundamental Contributions to the Continuum Theory of Evolving Phase
  Interfaces in Solids, Springer, 1999, pp.~33--69.

\bibitem{klobusicky2020two}
{\sc J.~Klobusicky, G.~Menon, and R.~L. Pego}, {\em Two-dimensional grain
  boundary networks: stochastic particle models and kinetic limits}, Archive
  for Rational Mechanics and Analysis,  (2020), pp.~1--55.

\bibitem{marder1987soap}
{\sc M.~Marder}, {\em Soap-bubble growth}, Physical Review A, 36 (1987),
  p.~438.

\bibitem{mcleod1962infinite}
{\sc J.~McLeod}, {\em On an infinite set of non-linear differential equations},
  The Quarterly Journal of Mathematics, 13 (1962), pp.~119--128.

\bibitem{meng2015study}
{\sc L.~Meng, H.~Wang, G.~Liu, and Y.~Chen}, {\em Study on topological
  properties in two-dimensional grain networks via large-scale monte carlo
  simulation}, Computational Materials Science, 103 (2015), pp.~165--169.

\bibitem{mul56}
{\sc W.~W. Mullins}, {\em Two-dimensional motion of idealized grain
  boundaries}, Journal of Applied Physics, 27 (1956), pp.~900--904.

\bibitem{smoluchowski1916drei}
{\sc M.~v. Smoluchowski}, {\em Drei vortrage uber diffusion, brownsche bewegung
  und koagulation von kolloidteilchen}, ZPhy, 17 (1916), pp.~557--585.

\bibitem{sykes1964exact}
{\sc M.~F. Sykes and J.~W. Essam}, {\em Exact critical percolation
  probabilities for site and bond problems in two dimensions}, Journal of
  Mathematical Physics, 5 (1964), pp.~1117--1127.

\bibitem{tutte1963draw}
{\sc W.~T. Tutte}, {\em How to draw a graph}, Proceedings of the London
  Mathematical Society, 3 (1963), pp.~743--767.

\bibitem{vandewalle2001cascades}
{\sc N.~Vandewalle and J.~Lentz}, {\em Cascades of popping bubbles along
  air/foam interfaces}, Physical Review E, 64 (2001), p.~021507.

\bibitem{von1952discussion}
{\sc J.~Von~Neumann}, {\em Discussion: grain shapes and other metallurgical
  applications of topology}, Metal Interfaces,  (1952).

\bibitem{weaire2001physics}
{\sc D.~L. Weaire and S.~Hutzler}, {\em The physics of foams}, Oxford
  University Press, 2001.

\end{thebibliography}

\end{document}